\title{Disjunctions of Two Dependence Atoms} 
\author{Nicolas Fröhlich}{Leibniz Universität Hannover, Germany}{nicolas.froehlich@thi.uni-hannover.de}{https://orcid.org/0009-0003-5413-1823}{The author appreciates funding by the German Research Agency (DFG) under the grant ME2479/3-1 and project id 511769688}
\author{Phokion G.\ Kolaitis}{ University of California Santa Cruz, USA \and IBM  Research, USA}{kolaitis@ucsc.edu}{https://orcid.org/0000-0002-8407-8563}{}
\author{Arne Meier}{Leibniz Universität Hannover, Germany}{meier@thi.uni-hannover.de}{https://orcid.org/0000-0002-8061-5376}{The author appreciates funding by the German Research Agency (DFG) under the grant ME2479/3-1 and project id 511769688}
\authorrunning{Nicolas Fröhlich, Phokion G.\ Kolaitis, Arne Meier} 
\keywords{Dependence logic, coherence, model-checking, complexity, functional dependencies} 
\newcommand{\struct}{\mathcal{M}}
\newcommand{\calA}{\struct}
\newcommand{\free}{\operatorname{free}}
\newcommand{\Fr}{\free}
\newcommand{\Dom}{\operatorname{Dom}}
\newcommand{\Range}{\operatorname{Rng}}
\newcommand{\rngrestr}[2]{\Range_{#2}(#1)}
\newcommand{\dep}{\operatorname{dep}}
\newcommand{\Var}{\operatorname{Var}}
\newcommand{\VAR}{\Var}
\newcommand{\Rel}{\operatorname{Rel}}
\newcommand{\FO}{\ensuremath\mathsf{FO}}
\newcommand{\ACzero}{\ensuremath\mathrm{AC^0}}
\newcommand{\Ptime}{\ensuremath\mathrm{P}}
\newcommand{\NP}{\ensuremath\mathrm{NP}}
\newcommand{\NL}{\ensuremath\mathrm{NL}}
\newcommand{\LOGSPACE}{\ensuremath\mathrm{L}}
\newcommand{\DL}{\ensuremath\mathcal{D}}
\newcommand{\ESO}{\ensuremath\mathsf{ESO}}
\newcommand{\MC}{\ensuremath\mathrm{MC}}
\newcommand{\TwoSAT}{\ensuremath\mathrm{2SAT}}
\newcommand{\course}{\mathrm{course}}
\newcommand{\dpt}{\mathrm{dpt}}
\newcommand{\inst}{\mathrm{inst}}
\newcommand{\leqlogm}{\leq^{\mathrm{log}}_m}
\newcommand{\STRUC}{\mathrm{STRUC}}
\newcommand{\specialTwoSat}{\mathrm{mtdf\text-2SAT}}
\newcommand{\problemdef}[3]{%
\begin{center}
\begin{tabular}{r@{\hspace{1ex}}p{11.5cm}}\toprule
\textsf{\bfseries Problem:}& #1 \\\midrule
\textsf{\bfseries Input:}& #2.\\
\textsf{\bfseries Question:}& #3?\\\bottomrule
\end{tabular}
\end{center}
}
\newcommand{\itemstyle}[1]{{\color{lipicsGray}\normalfont\bfseries\sffamily{#1}}}
\newcommand{\rpointer}[2]{{\small(#2\ref{#1})}}
\begin{document}

\maketitle
\begin{abstract}
  Dependence logic is a formalism that augments the syntax of first-order logic with dependence atoms asserting that the value of a  variable is determined by the values of some other variables, i.e., dependence atoms express functional dependencies in relational databases.  
  On finite structures, dependence logic captures $\NP$, hence there are sentences of dependence logic  whose model-checking problem is $\NP$-complete. 
  In fact, it is known that there are disjunctions of three dependence atoms whose model-checking problem is $\NP$-complete. 
  Motivated from considerations in database theory, we study the model-checking problem for disjunctions of two unary dependence atoms and establish a trichotomy theorem, namely, for every such formula, one of the following is true for the model-checking problem: (i) it is $\NL$-complete; (ii) it is $\LOGSPACE$-complete; (iii) it is first-order definable (hence, in $\ACzero$).
  Furthermore, we classify the complexity of the model-checking problem for disjunctions of two arbitrary dependence atoms, and also characterize when such a disjunction is coherent,  i.e., when it satisfies a certain  small-model property.
  Along the way, we identify a new class of 2CNF-formulas whose satisfiability problem is $\LOGSPACE$-complete.
\end{abstract}

\section{Introduction}

Notions of dependence and independence are ubiquitous in several different areas of mathematics and computer science, including linear algebra, probability theory, and database theory. 
Dependence logic $\DL$, developed by  V\"a\"an\"anen in \cite{DBLP:books/daglib/Jouko},
is a formalism for expressing and studying  such notions. The basic building blocks of dependence logic are the \emph{dependence atoms}, which, in effect, express functional dependencies on database relations.  Recall that functional dependencies
are integrity constraints on databases asserting  that the values of some attribute of a database relation is determined by the values of some of the other attributes of that relation (see \cite{DBLP:books/aw/AbiteboulHV95}). For example, if $R(\course,\dpt,\inst)$ is a database relation with attributes for courses, departments, and instructors, then the functional dependency $(\course \rightarrow \dpt)$ asserts that each course is offered by only one department, while the functional dependency $(\inst \rightarrow \course)$ asserts that each  instructor teaches only one course.
In dependence logic $\DL$, these two functional dependencies are captured by the dependence atoms $\dep(\course,\inst)$ and
$\dep(\inst,\course)$, respectively.  
The formulas of $\DL$ are obtained by combining 
dependence atoms,  atomic formulas, and negated  atomic formulas using conjunction $\wedge$, disjunction $\lor$,  existential quantification $\exists$, and universal quantification~$\forall$.

   Even though the syntax of $\DL$ resembles that of first-order logic $\FO$, the semantics of $\DL$ is second-order. Instead of single assignments of values to variables, 
   the semantics of $\DL$ uses  sets of assignments, called \emph{teams}, that represent  database relations on which the dependence atoms are interpreted. The second-order character of the semantics of $\DL$ is already manifested in the semantics of disjunction:
    a structure $\struct$ and a team $T$ satisfy an $\DL$-formula of the form $\varphi_1 \lor \varphi_2$ (in symbols, $(\struct,T)\models \varphi_1 \lor \varphi_1$) if there are two teams $T_1$ and $T_2$ such that  $T=T_1\cup T_2$, 
   $(\struct, T_1) \models \varphi_1$,
   and $(\struct,T_2) \models \varphi_2$.
   In terms of expressive power and as regards sentences, $\DL$ is known to have the same expressive power as existential second-order logic $\ESO$~\cite{KontinenV09,DBLP:books/daglib/Jouko}. From this  result and Fagin's Theorem \cite{Fagin74}, it follows that, on the class of all finite structures, the sentences of dependence logic can express precisely all decision problems in $\NP$.  In particular, there are $\DL$-sentences that can express $\NP$-complete problems; in other words,   the model checking problem for such sentences is $\NP$-complete.

 In view of the preceding state of affairs, it is natural to ask: are there syntactic restrictions such that   the model checking problem for $\DL$-formulas obeying these restrictions  is tractable? Rather surprisingly, Kontinen \cite{JaKontinen13} showed that the model checking problem can be $\NP$-complete
 even for  disjunctions of three  dependence atoms.
Concretely, consider the $\DL$-formula
$\dep(x,y)\lor \dep(u,v) \lor \dep (u,v)$. In \cite{JaKontinen13}, it is shown that the following problem is $\NP$-complete: given a team $T$ of arity $4$, does $T\models \dep(x,y)\lor \dep(u,v) \lor \dep (u,v)$?
In other words, are there three teams $T_1,T_2,T_3$ such that $T=T_1 \cup T_2 \cup T_3$, $T_1\models \dep(x,y)$, $T_2\models \dep(u,v)$, and  $T_3\models \dep(u,v)$? On the tractability side, Kontinen \cite{JaKontinen13}
showed that the model checking problem for disjunctions of  two dependence atoms is always in $\NL$, i.e., it is always solvable in non-deterministic logarithmic space. This was achieved via a logspace reduction of the model checking problem for a  disjunction of two dependence atoms to $\TwoSAT$, the satisfiability problem for $2$-CNF formulas, which is well known to be in $\NL$.

\subparagraph*{Contributions.}
In this paper, we carry out a systematic investigation of the disjunctions of two dependence atoms. The motivation for this investigation is twofold. First, while the model checking for disjunctions of two dependence atoms is always in $\NL$, there are such  disjunctions for which the model checking is $\NL$-complete, while for others it is $\FO$-definable (hence also in uniform $\ACzero$).  So, one main aim is to pinpoint the exact complexity of the model checking problem for a given disjunction of two dependence atoms.
Second, disjunctions of dependence atoms capture natural database integrity constraints that have not been considered earlier. To make  this point, let us consider the disjunction  $\dep(\course,\dpt) \lor \dep(\inst,\dpt)$. The disjunct $\dep(\course,\dpt)$
expresses the functional dependency that each course is offered by only one department, while the disjunct $\dep(\inst,\dpt)$ expresses the functional dependence that each instructor is affiliated with only one department.
It is unlikely that either of these two functional dependencies holds in any university, since it is typically the case that some courses are cross-listed by two departments and some instructors have joint appointments in two or more departments. Now, the disjunction $\dep(\course,\dpt) \lor \dep(\inst,\dpt)$ expresses the constraint that the database relation $R(\course,\dpt,\inst)$ can be split into two parts such that the functional dependency $\dep(\course,\dpt)$ holds in the first part and the functional dependency $\dep(\inst,\dpt)$ holds in the second part. This is a more relaxed integrity constraint, and it is  quite plausible that it holds in many universities. Yet, this type of integrity constraint has not been investigated in database theory. The reason is that the largest collection of integrity constraints in databases studied in the past is the collection of \emph{embedded implicational dependencies} \cite{FV86}, each of which is $\FO$-definable, while, as our result  will imply,  the disjunction 
$\dep(\course,\dpt) \lor \dep(\inst,\dpt)$ in not $\FO$-definable.
We believe that disjunctions of dependence atoms deserve to be studied as database integrity constraints in their own right;  the work reported here makes a first step in this direction by focusing on the model checking problem for such formulas.

We classify the complexity of the model checking problem for disjunctions of two dependence atoms by establishing a trichotomy theorem: for every $\DL$-formula $\varphi$ that is the disjunction of two dependence atoms, one of the
following three statements holds: (i) the model checking problem for $\varphi$ is $\NL$-complete;
(ii) the model checking problem for $\varphi$ is $\LOGSPACE$-complete, where $\LOGSPACE$ is the class of all decision problems solvable in deterministic logarithmic space; (iii) the model checking problem for $\varphi$ is $\FO$-definable, hence  in uniform $\ACzero$, where $\ACzero$ is the class of all decision problems solvable by constant-depth, polynomial-size circuits. 

As an illustration of this trichotomy theorem, consider again the database relation 
$R(\course,\dpt,\inst)$ with data about courses, departments, and instructors. Consider also the dependence atoms $\dep(\course,\inst)$,
$\dep(\course,\dpt)$,
$\dep(\inst,\dpt)$,
$\dep(\inst,\course)$ that express, respectively, the functional dependencies
$(\course \rightarrow \inst)$,
$(\course \rightarrow \dpt)$,
$(\inst \rightarrow \dpt)$,
$(\inst \rightarrow \course)$. Our trichotomy theorem classifies the complexity of the model checking for all possible disjunctions of these four dependence atoms. In particular, it implies that the following statements are true:
\begin{itemize}
\item The model checking problem for 
$\dep(\course,\dpt) \lor \dep(\inst,\dpt)$ is $\NL$-complete.
\item The model checking problem for 
$\dep(\inst,\course) \lor \dep(\course,\inst)$ is
$\LOGSPACE$-complete.
\item The model checking problem for 
$\dep(\course,\dpt) \lor \dep(\course,\inst)$ is $\FO$-definable, hence it is in uniform $\ACzero$.
\end{itemize}
As a byproduct of the preceding complexity-theoretic classification, we also  identify a  collection of $2$-CNF formulas for which the satisfiability problem is  $\LOGSPACE$-complete. 

The preceding results are about \emph{unary} dependence atoms $\dep(x,y)$, where $x$ and $y$ are variables. The syntax of dependence logic allows also for 
\emph{higher-arity} dependence atoms, i.e., for dependence atoms of the form $\dep(x_1,\ldots,x_n, y)$ with $n>1$; such atoms assert that the value of the variable $y$ is determined by the values of the variables $x_1,\ldots,x_n$. 
We leverage out results about  the model-checking problem for disjunctions of two unary dependence atoms to establish a dichotomy theorem about  the model checking problem for disjunctions of two higher-arity dependence atoms: this problem is either $\NL$-complete or $\FO$-definable.

We complement these complexity-theoretic classifications of the model checking problem  with a structural classification of disjunctions of two dependence atoms.
Specifically, we determine whether a given disjunction of two dependence atoms is a \emph{coherent}  or an \emph{incoherent} $\DL$-formula. 
As defined in \cite{JaKontinen13}, a $\DL$-formula $\varphi$ is \emph{coherent} if there is an integer $k$ such that a team $T$ satisfies $\varphi$ if and only if every sub-team $T'$ of $T$ of size $k$ satisfies $\varphi$. 
If no such $k$ exists, then $\varphi$ is \emph{incoherent}. 
We show that a disjunction of two dependence atoms is coherent if and only if the model checking for this formula is $\FO$-definable. 
Furthermore, for each coherent disjunction of two dependence atoms, we determine its
\emph{coherence level}, i.e., the 
smallest $k$ that establishes the coherence of 
that disjunction.

The work reported here paves the way for further 
exploration of the model checking problem for the quantifier-free fragment of dependence logic and, perhaps more importantly, it paves the way for further
interaction between dependence logic and database theory.

Our findings for disjunctions of two unary dependence atoms are summarized in Table \ref{tab:current results}. 

\begin{table}
    \centering
    \begin{tabular}{cll}
    \toprule
        Formula & Coherence& Model-checking\\ 
    \midrule    
        $\dep(x, y) \lor \dep(z, u)$ & incoherent \rpointer{prop:dep(x y) lor dep(z u) incoherent}{Prop.~} & $\NL$-complete \rpointer{prop:lor of 2 coherent in NL}{Prop.~} \\
        $\dep(x, z) \lor \dep(y, z)$ & incoherent \rpointer{thm:dep(x z) lor dep(y z) incoherent}{Thm.~} & $\NL$-complete \rpointer{thm:dep(x z) lor dep(y z) NL-complete}{Thm.~} \\  
        $\dep(x, y) \lor \dep(y, z)$ & incoherent \rpointer{cor:dep(x y) lor dep(y z) incoherent}{Thm.~} & $\NL$-complete \rpointer{thm:dep(x y) lor dep(y z) NL-complete}{Thm.~} \\
    \midrule
        $\dep(x, y) \lor \dep(y, x)$ & incoherent \rpointer{thm:dep(x y) lor dep(y x) incoherent}{Thm.~} & $\LOGSPACE$-complete \rpointer{thm:dep(x y) lor dep(y x) in L}{Thm.~} \\
    \midrule
        $\dep(x, y) \lor \dep(x, y)$ & coherent (level 3) \rpointer{prop:bigvee dep(x y) k-coherent}{Prop.~} & $\FO$ \rpointer{coherent-FO-reduction}{Prop.~} \\
        $\dep(x, y) \lor \dep(x, z)$ & coherent (level 4) \rpointer{thm:dep(x y) lor dep(x z) 4-coherent}{Thm.~} & $\FO$ \rpointer{coherent-FO-reduction}{Prop.~} \\
    \bottomrule
    \end{tabular}
    \caption{Complexity and coherence of disjunctions of two unary dependence atoms.}
    \label{tab:current results}
\end{table}

\section{Preliminaries}
This section contains some basic material about dependence logic. For additional material, we refer the reader to the monograph \cite{DBLP:books/daglib/Jouko}.

A \emph{vocabulary} is a tuple $\tau=(R_1,\ldots,R_m)$   of relation symbols, each of which has a specified natural number $r_i$ as its arity. A $\tau$-\emph{structure} is a tuple $\calA = (A,R_1^\calA,\ldots,R_m^\calA)$, where $A$ is a set, called the \emph{domain} of $\calA$, and each $R_i^\calA$ is an $r_i$-ary relation on $A$. 

Let  $\VAR$ be a countably infinite set of variables. The syntax of dependence logic $\DL$ over a vocabulary $\tau$ extends the syntax of first-order logic $\FO$  in negation normal form with dependence atoms as additional atomic formulas.  A \emph{dependence atom} is an expression of the form $\dep(y_1,\ldots,y_n,z)$,
where $y_1,\ldots,y_n,z$ are distinct variables in $\VAR$ and $n\geq 0$. If $n=1$, we say that the dependence atom is \emph{unary}.
Intuitively, a dependence atom asserts that the value of the variable $z$ depends on the values of the variables $y_1,\ldots,y_n$. In particular, if $n=0$, we have a dependence atom of the form $\dep(z)$, which asserts that $z$ is a constant value, since it depends on no other variables; such atoms are called \emph{constancy} atoms.

The formulas of dependence logic $\DL$ are defined by the following Backus-Naur form:
\[	\psi {\coloneqq}\,
	x_k {=}x_\ell\mid 
	R_i(x_1,\dots,x_{r_i})\mid
	\lnot R_i(x_1,\dots,x_{r_i})\mid \dep(y_1,\ldots,y_n,z) \mid
	\psi{\land}\psi\mid
	\psi{\lor}\psi\mid
	\exists x\psi\mid
	\forall x\psi,
\]
where $x_k,x_\ell,x_1,\ldots,x_{r_i}$ are (not necessarily distinct) variables in $\VAR$, $R_i$ is a  relation symbol in $\tau$ of arity $r_i\in\mathbb N$, and $y_1,\ldots,y_n,z$ are distinct variables in $\VAR$.

Recall that the semantics of $\FO$-formulas are given using a structure $\calA$ and an assignment, i.e., a mapping from  $\VAR$ to the domain of $\calA$. Due to the  presence of dependence atoms, however, the semantics of $\DL$-formulas are given using a structure $\calA$ and a set of assignments. 

\begin{definition}
Let $X$ be a set of variables (i.e., $X\subseteq \VAR$) and let $A$ be a set.
\begin{itemize}
\item An \emph{assignment with domain $X$ and range $A$}
is a mapping $s\colon X \to A$.

\item A \emph{team with domain $X$ and range $A$} is a set $T$ of assignment with domain $X$ and range $A$.
We will write $\Dom(T)$ to denote the domain $X$ of the team $T$.
Similar $\Range(T)$ denotes the range $A$ of the team $T$.
\item If $T$ is a team with domain $X$ and  $Y\subseteq X$, then the
\emph{restriction of  $T$ on $Y$} is the team  $T\upharpoonright Y = \{\,s\upharpoonright Y \mid s\in T\,\}$, where
$s\upharpoonright Y$ is the restriction of the assignment $s$ to $Y$.
\end{itemize}

\end{definition}

Before giving the semantics of $\DL$-formulas, we need to introduce two operations on teams.

\begin{definition}
Let $\VAR$ be the set of all variables and let $A$ be a set.
\begin{itemize}
\item If $s\colon \VAR  \to A$ is an assignment with domain $\VAR$ and range $A$, $x$ is a variable, and $a$ is an element of $A$, then $s^x_a\colon \VAR \to A$ is the assignment 
such that $s^x_a(x) = a$
and $s^x_a(y)=s(y)$, for every variable $y\not = x$.

\item If $T$ is a team with domain $\VAR$ and range $A$,  $f\colon T \to A$ is a function, and $x$ is a variable, then the \emph{supplement team} $T^x_f$ is the team 
$ \{s^x_{f(s)} \mid s \in T\}.$

\item If   $T$ is a team with domain $\VAR$ and range $A$, and $x$ is a variable, then the \emph{duplicate team} $T^x_A$ is the team  $ \{s^x_a \mid x \in \VAR ~  \mbox{{\rm and}} ~  a \in A \}$.
\end{itemize}

\end{definition}

We are now ready to define the semantics of dependence logic $\DL$. In doing so, the supplement team will be used to define the semantics of existential quantification, while the duplicate team will be used to define  the semantics of universal quantification.

\begin{definition}\label{def-semantics}
	Let $\tau=(R_1,\ldots,R_m) $ be a vocabulary and let $\calA=(A,R_1^\calA,\ldots,R_m^\calA)$ be  a $\tau$-structure.
    The satisfaction relation $(\calA,T) \models \varphi$, where $T$ is a team with domain $\VAR$ and range $A$, and $\varphi$ is a formula of dependence logic $\DL$,  is defined by induction on the construction of $\varphi$ and simultaneously for all teams with domain $\VAR$ and range $A$ as follows:
\begin{itemize}

\item $(\calA,T)\models x_k= x_l$   if for all  $s\in T$, we have $s(x_k) = s(x_l)$;
\item $
(\calA,T)\models R_i(x_1,\ldots,x_{r_i})$ if for all  $s\in T$, we have  $(s(x_1),\ldots,s(x_{r_i})) \in R_i^{\calA}$;
\item 
$(\calA,T)\models \neg R_i(x_1,\ldots,x_{r_i})$ if for all  $s\in T$, we have  $(s(x_1),\ldots,s(x_{r_i})) \not \in R_i^{\calA}$;
\item 	
    $(\calA,T)\models \dep(y_1,\ldots,y_n,z)$ if for all $s_1,s_2\in T$ with
    $s_1(y_j)=s_2(y_j)$ for $1\leq j\leq n$, we have that
    $s_1(z)=s_2(z)$;
    \item 
	$(\calA,T)\models \varphi_1\land \varphi_2$   if  $(\calA,T)\models \varphi_1$  and  $(\calA,T)\models \varphi_2$; 
    \item $(\calA,T)\models \varphi_1\lor \varphi_2$   if there are teams  $T_1$, $T_2$ with $T=  T_1\cup T_2$ and $ (\calA,T_i)\models \varphi_i$,  $i =1,2$;
    \item $(\calA,T)\models\exists x\varphi$  if 
    there is a function 
    $f\colon T\to A$ such that $(\calA,T^x_f)\models\varphi$, where $T^x_f$ is the supplement team associated with $T$, $x$, and $f$;
\item $(\calA,T)\models\forall x\varphi$  if  $(\calA,T^x_A)\models\varphi$, where $T^x_A$ is the duplicate team associated with $T$ and $x$.
\end{itemize}
\end{definition}

The next proposition, whose proof can be found in \cite{DBLP:books/daglib/Jouko}, states two important properties of dependence logic. 
In what follows, if $\varphi$ is a $\DL$-formula, then $\Fr(\varphi)$ is the (finite) set of all free variables of $\varphi$.
 
\begin{proposition}[{\cite[Lemma~3.27, Prop.~3.10]{DBLP:books/daglib/Jouko}}] \label{lem:properties}
Let $\calA=(A,R_1^\calA,\ldots,R_m^\calA)$ be a $\tau$-structure and let $\varphi$ be a $\DL$-formula.
\begin{description}
    \item [\emph{(Locality})] Let $X$ be a set  such that $\Fr(\varphi)\subseteq X \subseteq \VAR$.
    If $T_1$, $T_2$ are teams with domain $\VAR$ and range $A$ such that  $T_1\upharpoonright X = T_2\upharpoonright X$, then
    $(\calA,T_1) \models \varphi$  if and only if  $(\calA,T_2) \models \varphi$.

    \item [\emph{(Downward Closure)}] Let $P$ and $T$ be teams with domain $\VAR$ and range $A$ such that $P\subseteq T$. If
    $(\calA,T)\models \varphi$, then also $(\calA,P)\models \varphi$.
\end{description}    
\end{proposition}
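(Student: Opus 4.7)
I would prove both properties simultaneously by structural induction on $\varphi$, running the two inductions in parallel so that stronger hypotheses are available in the inductive step. The base cases collect the atomic formulas: equality $x_k=x_\ell$, the relational atoms $R_i$ and $\neg R_i$, and the dependence atoms $\dep(y_1,\ldots,y_n,z)$. In each of these, the satisfaction clause is a universally quantified condition over assignments in $T$ (resp.\ over pairs of assignments) that mentions only the variables appearing in the atom, which by assumption lie in $X$. Locality is then immediate from $T_1{\upharpoonright}X = T_2{\upharpoonright}X$, and downward closure is immediate because any universally quantified condition over a set is inherited by subsets.

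For the connectives, conjunction is routine: both properties pass through. For disjunction $\varphi_1\lor\varphi_2$, downward closure is handled by intersecting the witness decomposition with $P$: if $T=T_1\cup T_2$ witnesses satisfaction in $T$, then $P=(P\cap T_1)\cup(P\cap T_2)$ and the induction hypothesis on each side gives $(\calA, P\cap T_i)\models\varphi_i$. Locality for $\lor$ similarly transports a witness decomposition of $T_1$ to one of $T_2$ by matching assignments through their common restriction to $X$. The universal quantifier case reduces to the observation that $T\mapsto T^x_A$ is monotone in $T$ (for downward closure) and that $T_1{\upharpoonright}X = T_2{\upharpoonright}X$ implies $(T_1)^x_A{\upharpoonright}(X\cup\{x\}) = (T_2)^x_A{\upharpoonright}(X\cup\{x\})$ (for locality), after which the inductive hypothesis applies with the enlarged free-variable set $X\cup\{x\}\supseteq \Fr(\varphi)$.

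The step that requires the most care, and the one I expect to be the main obstacle, is locality for the existential quantifier $\exists x\varphi$. Given a witness $f\colon T_1\to A$ with $(\calA,(T_1)^x_f)\models\varphi$, I need to produce a witness $g\colon T_2\to A$ such that $(\calA,(T_2)^x_g)\models\varphi$. The natural choice is: for each $s_2\in T_2$, pick some $s_1\in T_1$ with $s_1{\upharpoonright}X = s_2{\upharpoonright}X$ (which exists because $T_1{\upharpoonright}X=T_2{\upharpoonright}X$) and set $g(s_2)=f(s_1)$. One then checks that $(T_1)^x_f{\upharpoonright}(X\cup\{x\}) = (T_2)^x_g{\upharpoonright}(X\cup\{x\})$, and invokes the induction hypothesis with $X\cup\{x\}\supseteq \Fr(\varphi)$. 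The subtlety lies in the fact that the representative $s_1$ is not unique, so one must verify that distinct choices do not destroy the equality of restrictions; the verification is mechanical but needs to be done carefully. For downward closure on $\exists x$, the situation is simpler: $f{\upharpoonright}P$ witnesses satisfaction for $P$ because $P^x_{f\upharpoonright P}\subseteq (T)^x_f$, after which the induction hypothesis closes the argument.
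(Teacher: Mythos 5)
This proposition is not proved in the paper at all: it is imported from V\"a\"an\"anen's monograph (Lemma~3.27 and Proposition~3.10 there), and the simultaneous structural induction you outline is exactly the standard argument from that source. Your atomic cases, conjunction, disjunction (intersecting the witness split with $P$ for downward closure; pulling the split back along equal $X$-restrictions for locality), and the universal quantifier are all handled correctly.

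The one step that fails as written is precisely the one you single out: locality for $\exists x\,\varphi$. With $g(s_2)\coloneqq f(s_1)$ for a chosen representative $s_1$ of $s_2$, the claimed identity $(T_1)^x_f{\upharpoonright}(X\cup\{x\}) = (T_2)^x_g{\upharpoonright}(X\cup\{x\})$ is false in general, and no choice of $g$ can repair it: if $T_1$ contains two assignments $s,s'$ with $s{\upharpoonright}X=s'{\upharpoonright}X$ but $f(s)\neq f(s')$, while $T_2$ contains a single assignment with that restriction to $X$, then the left-hand side contributes two elements over that restriction and the right-hand side only one. Only the inclusion $(T_2)^x_g{\upharpoonright}(X\cup\{x\})\subseteq (T_1)^x_f{\upharpoonright}(X\cup\{x\})$ holds. (This is also why the analogous pull-back works for $\lor$ but not here: a disjunction witness is a pair of sub-teams, which pulls back exactly, whereas an $\exists$ witness is a function that may distinguish assignments with identical $X$-restrictions.) The repair stays entirely inside your simultaneous induction: let $W\subseteq (T_1)^x_f$ consist of the supplements $(s_1)^x_{f(s_1)}$ of the chosen representatives only. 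Then $W{\upharpoonright}(X\cup\{x\}) = (T_2)^x_g{\upharpoonright}(X\cup\{x\})$ holds exactly, the inductive downward-closure hypothesis gives $(\calA,W)\models\varphi$, and the inductive locality hypothesis (applicable since $\Fr(\varphi)\subseteq X\cup\{x\}$) transfers this to $(T_2)^x_g$. With that correction the proof is complete.
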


The Locality property tells that the satisfaction relation $(\calA, T) \models \varphi$ depends only on the restriction of the team $T$  on the set   of  the free variables of $\varphi$. In particular, if $\psi$ is a $\DL$-sentence (i.e., a $\DL$-formula with no free variables), then
$(\calA,T)\models \psi$ if and only $\calA,\{\emptyset\})\models \psi$, where $\{\emptyset\}$ is the team consisting of the empty assignment.

V\"a\"an\"anen \cite{DBLP:books/daglib/Jouko} showed that $\DL$-sentences have the same expressive power as the sentences of  existential second-order logic $\ESO$. Combined with  Fagin's theorem \cite{Fagin74}, this result implies that, on the class of all finite structures, $\DL$-sentences express precisely all $\NP$ problems. 
Kontinen and V\"a\"an\"anen \cite{KontinenV13} showed that $\DL$-formulas (with free variables) have the same expressive power as the \emph{downward closed} formulas of existential second-order logic $\ESO$ (see~\cite{KontinenV13} for the precise statement of this result).

In view of the Locality property, from now on we  will only consider teams on finite domains, as long as, in each case, the domain of the team considered contains the free variables of the $\DL$-formula at hand. 

\subparagraph*{Coherence.}
In the following, we present the definition of a small-model property for dependence logic.
    \begin{definition}[$k$-coherence, {\cite[Def.~3.1]{JaKontinen13}}]
    Let  $\varphi(x_1, \dots, x_n)$ be  a quantifier-free $\mathcal{D}$-formula.
    We say that  $\varphi$ is \emph{$k$-coherent} if for all structures $\struct$ and teams $T$ of range $\mathcal{M}$ such that $\free(\varphi) \subseteq \Dom(T)$, the following are equivalent:
    \begin{enumerate}[(1.)]
        \item $(\mathcal{M}, T) \models \varphi$.
        \item For all $k$-element sub-teams $S \subseteq T$ it holds that $(\mathcal{M}, S) \models \varphi$.
    \end{enumerate}
\end{definition}

Notice that for the above equivalence, the direction ``\itemstyle{(1.)} $\Rightarrow$ \itemstyle{(2.)}'' is always true due to $\DL$ being downwards closed (see~\Cref{lem:properties}).

\begin{definition}[coherence-level, incoherence]
The \emph{coherence-level} of $\varphi$ is the least natural number $k \in \mathbb{N}$ such that $\varphi$ is $k$-coherent; if such a $k$ does not exist, we call $\varphi$ \emph{incoherent}.
\end{definition}

Clearly, first-order atomic formulas have a coherence-level of $1$ and dependence atoms a coherence-level of $2$. 

\begin{proposition}\label{prop:bigvee dep(x y) k-coherent}\label{prop:dep(x y) lor dep(z u) incoherent}
The following statements are true:
\begin{enumerate}[(1.)] 
    \item The formula $\dep(x, y) \lor \dep(x, y)$ has coherence-level $3$;  \hfill\cite[Prop.~3.8]{JaKontinen13} 
    \item The formula $\dep(x, y) \lor \dep(z, u)$ is incoherent. \hfill\cite[Thm.~3.11]{JaKontinen13}
\end{enumerate}
\end{proposition}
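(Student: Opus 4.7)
For statement (1), my plan is to associate with each team $T$ the \emph{$xy$-conflict graph} $G_T$, whose vertices are the assignments in $T$ and whose edges join pairs $s_1, s_2$ with $s_1(x) = s_2(x)$ and $s_1(y) \neq s_2(y)$. A subteam satisfies $\dep(x,y)$ exactly when it is an independent set in $G_T$, so $T \models \dep(x,y) \lor \dep(x,y)$ iff $V(G_T)$ can be split into two independent sets, i.e.,\ iff $G_T$ is bipartite. The crucial structural observation is that $G_T$ is a disjoint union of complete multipartite graphs (within each $x$-class the parts are the $y$-fibers, and no edge crosses distinct $x$-classes), and any complete multipartite graph is bipartite iff it has at most two nonempty parts, iff it is triangle-free. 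Hence $G_T$ is bipartite iff every induced subgraph on three vertices is triangle-free, iff every $3$-subteam of $T$ satisfies the formula---establishing $3$-coherence. To rule out $2$-coherence, the team of three assignments sharing a common $x$-value with three distinct $y$-values yields a triangle in $G_T$ (so $T$ fails), while every $2$-subteam splits trivially into singletons.

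For statement (2), the plan is to exhibit, for each $n \geq 3$, a \emph{minimally unsatisfying} team $T_n$ of size $n$---that is, $T_n \not\models \dep(x,y) \lor \dep(z,u)$ while every proper subteam does. By downward closure (\Cref{lem:properties}), every $k$-subteam of $T_n$ with $k < n$ is contained in some satisfying $(n-1)$-subteam and hence satisfies the formula, which makes $T_n$ a witness of $k$-coherence failure for every $k < n$. The main tool is the reduction to $2$-CNF: $T \models \dep(x,y) \lor \dep(z,u)$ iff the $2$-CNF $\Phi_T$ is satisfiable, where $\Phi_T$ has a variable $v_s$ per $s \in T$, a clause $(\lnot v_s \lor \lnot v_t)$ per $xy$-conflicting pair, and $(v_s \lor v_t)$ per $zu$-conflicting pair; any satisfying assignment encodes the required partition $T = T_1 \cup T_2$. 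The task reduces to realizing, for each $n$, an arbitrarily large minimally unsatisfiable $2$-CNF of this restricted shape whose implication graph places some $v$ and $\lnot v$ in a single strongly connected component passing through every variable.

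The main obstacle is \emph{realizability}: $G_{xy}$ and $G_{zu}$ are not arbitrary graphs but must be disjoint unions of complete multipartite graphs (by $x$-classes subdivided into $y$-fibers, and analogously for $z, u$). I would work around this by laying out the target implication cycle as a sequence of small $x$- and $z$-classes; within each class, any unwanted conflict is suppressed by assigning a shared $y$- or $u$-value to the relevant assignments. For instance, the five-assignment team with intended $xy$-conflicts $\{\{1,2\},\{1,3\},\{4,5\}\}$ and $zu$-conflicts $\{\{2,3\},\{1,4\},\{1,5\}\}$ is realizable via the $x$-classes $\{s_1,s_2,s_3\}$ with $s_2(y) = s_3(y)$ and $\{s_4,s_5\}$, together with $z$-classes $\{s_1,s_4,s_5\}$ with $s_4(u) = s_5(u)$ and $\{s_2,s_3\}$; the resulting implication cycle $v_1 \Rightarrow \lnot v_2 \Rightarrow v_3 \Rightarrow \lnot v_1 \Rightarrow v_4 \Rightarrow \lnot v_5 \Rightarrow v_1$ forces $T_5$ to fail while every $4$-subteam partitions successfully. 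Iterating such constructions for increasing $n$---as carried out in detail in \cite{JaKontinen13}---produces minimally unsatisfying teams of every size $n$, thereby establishing incoherence.
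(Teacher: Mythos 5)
The paper gives no proof of this proposition at all --- both parts are imported verbatim from \cite{JaKontinen13} --- so your argument stands or falls on its own merits. For part (1.) it stands: encoding satisfaction of $\dep(x,y)$ as independence in the conflict graph, observing that this graph is a disjoint union of complete multipartite graphs (one per $x$-class, with parts the $y$-fibers), and concluding that bipartiteness coincides with triangle-freeness is a correct and fully self-contained argument; a triangle is witnessed by exactly three assignments, which gives $3$-coherence, and your one-$x$-value, three-$y$-values team correctly rules out $2$-coherence. This is strictly more informative than the paper's bare citation.

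For part (2.) the framework is sound (minimally unsatisfying teams of every size, downward closure, the $2$-CNF translation, and the correct identification of the realizability constraint that both conflict graphs must be disjoint unions of complete multipartite graphs), and your $T_5$ checks out: the clause set $(\lnot v_1\lor\lnot v_2)$, $(\lnot v_1\lor\lnot v_3)$, $(\lnot v_4\lor\lnot v_5)$, $(v_2\lor v_3)$, $(v_1\lor v_4)$, $(v_1\lor v_5)$ is unsatisfiable while deleting any one variable makes it satisfiable. The genuine gap is that the uniform family $(T_n)_{n\geq 3}$ --- which is the entire content of incoherence --- is never constructed; the phrase ``iterating such constructions \ldots\ as carried out in detail in \cite{JaKontinen13}'' outsources exactly the step that carries the proof, leaving you no better off than the paper's citation. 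To close it, make the iteration explicit: keep $s_1$ as the pivot with $xy$-conflicts to $s_2,s_3$ and $zu$-conflicts to $s_4,s_5$, but replace the direct conflict inside each pair by an alternating conflict chain of arbitrary even length (each link a two-element $x$- or $z$-class, with shared $y$- or $u$-values suppressing unwanted edges, exactly as in your realizability trick), so that setting $v_1$ either way propagates along a chain to a contradiction, while deleting any single assignment breaks some chain and admits a split. Until such a family is written down and its minimality verified, part (2.) is a plan rather than a proof.
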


 We will use the shortcut $\Rel(T) \coloneqq \{(s(x_1), \dots, s(x_n)) \mid s \in T\}$, when the team $T$ is given as a relation. 
 The following proposition shows that every quantifier-free $k$-coherent formula of dependence logic is first-order definable (and, hence, is in uniform $\ACzero$).
\begin{proposition}[{\cite[Thm.~4.9]{JaKontinen13}}]\label{coherent-FO-reduction}
    Suppose  $\varphi(x_1, \dots, x_n)\in\DL$ is a quantifier-free $k$-coherent formula over a vocabulary $\tau$.
    Then there is a sentence $\varphi^* \in \FO(\tau \cup \{R\})$, where $R$ is $n$-ary, such that for all $\tau$-structures $\mathcal M$ and for all teams $T$ of domain $\{x_1, \dots, x_n\}$, we have that
    $(\mathcal M, T) \models \varphi(x_1, \dots, x_n)$ 
   if and only if 
    $\Rel(T) \models \varphi^*(R)$.
\end{proposition}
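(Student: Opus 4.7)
The plan is to exhibit the sentence $\varphi^\ast$ explicitly by combining $k$-coherence with a first-order encoding of team semantics on teams of size bounded by $k$. First, $k$-coherence (together with downward closure, see \Cref{lem:properties}) lets me rephrase $(\struct, T) \models \varphi$ as: every sub-team $S \subseteq T$ with $|S| \leq k$ satisfies $(\struct, S) \models \varphi$. Any such $S$ can be presented (possibly with repetitions) as a $k$-tuple $(\bar y_1,\dots,\bar y_k)$ of tuples from $\Rel(T) = R^{\struct}$, so the target reduces to expressing in first-order logic over $\tau\cup\{R\}$ that every $k$-tuple of tuples from $R$ yields a sub-team satisfying $\varphi$.

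Second, I would construct by induction on $\varphi$, for each subset $I \subseteq \{1,\dots,k\}$, a first-order $\tau$-formula $\Phi_\varphi^I(\bar y_1,\dots,\bar y_k)$ such that $\Phi_\varphi^I(\bar a_1,\dots,\bar a_k)$ holds in $\struct$ iff the team $\{\bar a_i \mid i \in I\}$ (regarded as assignments to $x_1,\dots,x_n$) satisfies $\varphi$. For a first-order literal $\alpha$, set $\Phi_\alpha^I \coloneqq \bigwedge_{i \in I} \alpha[\bar y_i]$; for a dependence atom $\dep(x_{j_1},\dots,x_{j_m},x_l)$, take a conjunction over pairs $i,i' \in I$ asserting that equality on the coordinates $j_1,\dots,j_m$ implies equality on coordinate $l$; for conjunction, set $\Phi_{\varphi_1 \land \varphi_2}^I \coloneqq \Phi_{\varphi_1}^I \land \Phi_{\varphi_2}^I$; and for disjunction, replace the team-semantic split by a finite disjunction $\Phi_{\varphi_1 \lor \varphi_2}^I \coloneqq \bigvee_{I_1 \cup I_2 = I} \Phi_{\varphi_1}^{I_1} \land \Phi_{\varphi_2}^{I_2}$ ranging over all covers of $I$ by two subsets. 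Note that $I_1 = \emptyset$ or $I_2 = \emptyset$ is allowed and yields the conventionally true empty conjunction, matching the fact that the empty team satisfies every formula.

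Third, I would define
\[
\varphi^\ast \coloneqq \forall \bar y_1 \cdots \forall \bar y_k \left( \bigwedge_{i=1}^{k} R(\bar y_i) \;\to\; \Phi_\varphi^{\{1,\dots,k\}}(\bar y_1,\dots,\bar y_k) \right),
\]
and verify the equivalence $(\struct, T) \models \varphi \iff \Rel(T) \models \varphi^\ast$ by unwinding definitions, using the inductive characterization of $\Phi_\varphi^I$ on the left-hand direction and $k$-coherence on the right-hand direction.

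The main obstacle is the disjunction case: team-semantic disjunction is a priori a second-order condition, since it existentially quantifies over splittings $T = T_1 \cup T_2$ of the team. The critical observation is that once attention is restricted to a sub-team indexed by a fixed $I$, every splitting $\{\bar y_i \mid i\in I\} = T_1 \cup T_2$ is realized by some pair $(I_1, I_2)$ of subsets of $I$ with $I_1 \cup I_2 = I$, of which there are only finitely many. This converts the second-order existential into a finite first-order disjunction; without the cardinality bound coming from $k$-coherence, this trick would fail and $\varphi$ would not, in general, be first-order definable.
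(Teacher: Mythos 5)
Your proposal is correct, and it follows essentially the standard argument: the paper itself does not prove this proposition but quotes it from Kontinen~\cite[Thm.~4.9]{JaKontinen13}, whose proof likewise encodes satisfaction on index-sets $I\subseteq\{1,\dots,k\}$ by induction, turning the team-splitting quantifier into a finite disjunction over covers $I_1\cup I_2=I$ and then universally quantifying over $k$-tuples from $R$. Your handling of the edge cases (sub-teams presented as $k$-tuples with repetition, empty index sets for the empty team) is the right way to make the reduction to $k$-coherence airtight.
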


\section{Model-checking}\label{sec:mc}
The  model-checking problem is a fundamental decision problem arising in every logic. Informally, this problem asks if a finite structure satisfies a formula of the logic at hand. Actually, according to the taxonomy introduced by Vardi \cite{DBLP:conf/stoc/Vardi82}, there are two versions of the model-checking problem, the \emph{combined complexity} version and the \emph{data complexity} version. In the first version, 
the input consists of a formula and a finite structure, 
 while in the second version the formula is fixed and the input consists of just a finite structure.

Here, we focus on the data complexity version of the model-checking problem for dependence logic $\DL$. Specifically, every $\DL$-formula $\varphi$ gives rise to the following decision problem.

\problemdef{$\MC(\varphi)$ --- the model-checking problem for $\varphi\in\DL$}{A finite $\tau$-structure $\calA=(A,R_1^\calA,\ldots, R_m^\calA)$ and a team $T$  with domain a finite set $X$ and range $A$ such that $\free(\varphi) \subseteq X \subseteq \VAR$}
{Does $(\struct,T)\models\varphi$}

As mentioned in the introduction, the results in \cite{KontinenLV13,DBLP:books/daglib/Jouko} imply that $\MC(\varphi)$ is in $\NP$ for every $\DL$-formula $\varphi$, and  that there are $\DL$-formulas $\psi$ for which $\MC(\psi)$ is an $\NP$-complete problem.  Furthermore,  the intractability of the model-checking problem is even true for some disjunctions of three dependence atoms; for example,  $\MC(\dep(x,y)\lor \dep(u,v)\lor \dep(u,v))$ is $\NP$-complete, as shown by Kontinen~\cite{JaKontinen13}.

Note that if $\varphi$ is
a $\DL$-formula in which no relation symbols from the vocabulary $\tau$-occur, then the satisfaction relation $(\calA,T)\models \varphi$ depends only on the team $T$ and not on the $\tau$-structure $\calA$. 
From this point on, we will focus on dependence logic formulas in which  no relation symbols from the vocabulary $\tau$ occur. For this reason, we will drop $\calA$ from the satisfaction relation and, instead, write $T\models \varphi$, where $\varphi$ contains no 
relation symbols from $\tau$. Thus, if $\varphi$ is such a $\DL$-formula, then the model-checking problem $\MC(\varphi)$ asks: given a team $T$ with domain a finite set $X$ such that $\free(\varphi)\subseteq X\subseteq \VAR$, does
$T\models \varphi$?

The following result yields a sufficient condition for the model-checking problem to be in $\NL$, the class of all decision problems solvable in non-deterministic logarithmic space.

\begin{restatable}[{\cite[Thm.~4.12]{JaKontinen13}}]{proposition}{tosat}\label{prop:lor of 2 coherent in NL}
    Suppose $\varphi$ and $\psi$ are $2$-coherent quantifier-free $\mathcal{D}$-formulas with no relations symbols.
    Then there is a logarithmic-space reduction from the model-checking problem $\MC(\varphi \lor \psi)$ to $\TwoSAT$.
    Consequently, $\MC(\varphi \lor \psi)$ is in $\NL$.
\end{restatable}

We now describe the reduction of $\MC(\varphi \lor \psi)$ to $\TwoSAT$ in some detail, since it will be relevant later on (\Cref{sec:krom sat}) in the present paper.

    Given a team $T = \{s_1, \dots, s_k\}$, go through all two-element subsets $\{s_i, s_j\}$ of $T$, and construct  a set $C$ of two-clauses as follows:
\begin{itemize}
    \item If $\{s_i, s_j\} \not \models \varphi$, then $(x_i \lor x_j) \in C$.
    \item If $\{s_i, s_j\} \not\models \psi$, then $(\lnot x_i \lor \lnot x_j) \in C$.  
\end{itemize}
Then let $\Theta_T$ be the $2$CNF-formula $ \bigwedge_{\theta \in C} \theta$. It can be shown that $T\models \varphi 
\lor \psi$ if and only if
$\Theta_T$ is satisfiable (see \cite{JaKontinen13} for a detailed proof).

Since every dependence atom is a $2$-coherent formula, we obtain the following result.
\begin{corollary}
    If $\varphi$ and $\psi$ are dependence atoms, then $\MC(\varphi\lor\psi)$ is in $\NL$.
\end{corollary}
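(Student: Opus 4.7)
The plan is to derive this statement as an immediate consequence of \Cref{prop:lor of 2 coherent in NL}. The only thing to check is that every dependence atom $\dep(y_1,\ldots,y_n,z)$ satisfies the two hypotheses of that proposition: it is a quantifier-free $\DL$-formula with no relation symbols, and it is $2$-coherent. The former is obvious from the syntax given in the Backus--Naur form above, since dependence atoms form their own atomic category and involve no symbol from $\tau$.

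For the $2$-coherence, I would argue directly from the semantic clause for $\dep(\bar y,z)$. The direction from satisfaction by $T$ to satisfaction by every two-element subteam is just the Downward Closure property in \Cref{lem:properties}. For the converse, suppose $T\not\models\dep(y_1,\ldots,y_n,z)$; then by definition there exist $s_1,s_2\in T$ with $s_1(y_j)=s_2(y_j)$ for all $1\leq j\leq n$ but $s_1(z)\neq s_2(z)$. The two-element subteam $\{s_1,s_2\}\subseteq T$ therefore fails to satisfy the dependence atom, contradicting the assumption that every two-element subteam satisfies it. Hence every dependence atom is $2$-coherent, a fact already noted in the text just after the definition of coherence-level.

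With both hypotheses verified, \Cref{prop:lor of 2 coherent in NL} applies to $\varphi\lor\psi$ and yields a logarithmic-space reduction from $\MC(\varphi\lor\psi)$ to $\TwoSAT$; since $\TwoSAT\in\NL$ and $\NL$ is closed under logspace reductions, we conclude $\MC(\varphi\lor\psi)\in\NL$.

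There is essentially no obstacle in this argument; it is a bookkeeping corollary. The only remark worth making in the write-up is that the observation ``dependence atoms have coherence-level~$2$'' has already been stated in the paper, so one can simply cite it rather than reproduce the two-line verification above.
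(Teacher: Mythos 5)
Your proposal matches the paper's argument exactly: the paper derives this corollary immediately from Proposition~\ref{prop:lor of 2 coherent in NL} together with the observation (stated after the definition of coherence-level) that every dependence atom is $2$-coherent. Your explicit two-line verification of $2$-coherence via the semantic clause and Downward Closure is correct and is just the detail the paper leaves implicit.
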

\subsection{NL-Completeness}
This section will cover 
the cases in which the model checking problem for disjunctions of two unary dependence atoms is complete for nondeterministic logarithmic space.
\begin{restatable}[{\cite[Thm.~4.14]{JaKontinen13}}]{proposition}{fromsat}\label{prop:dep(x y) lor dep(z w) NL-hard}
    The problem $\MC(\dep(x, y) \lor \dep(z, w))$ is $\NL$-complete.
\end{restatable}
We will outline the reduction, constructed in the proof of this result, because the next result (\Cref{thm:dep(x y) lor dep(y z) NL-complete}) is based on this construction. 
The full proof of  Proposition \ref{prop:dep(x y) lor dep(z w) NL-hard} can be found in \cite{JaKontinen13}.
The intuition of the construction is that the team encodes the clauses of the formula. 
The split ensures a consistent assignment on the left side, while the right side allows to ``buffer'' one unsatisfied literal per clause.
Consider a $\TwoSAT$ instance $\theta(p_1, \dots, p_m) = \bigwedge_{i \in I} E_i$, with $E_i = (A_{i_1} \lor A_{i_2}), i \in I$, where $A_{i_1}, A_{i_2}$ are literals.
For each conjunct $E_i$, create a team $T_{E_i} = \{s_{i_1}, s_{i_2}\}$ with domain $\{x, y, z, w\}$, where variable $x$ ranges over $\{p_1, \dots, p_m\}$; variable $y$ ranges over the truth values $\{0, 1\}$;
 variable $z$ ranges over the indices $i \in I$, thus $z$ denotes the clause $E_i$; and 
variable $w$ ranges over the values $\{0, 1\}$.
Variables $z$ and $w$ ensure that at least one of the assignments from each $T_{E_i}$ go into the subset of $T$ that eventually encodes an assignment satisfying $\theta$.
For example, the team $T_{E_i}$ for a clause $(p_k \lor \lnot p_j)$ is the one on the left in \Cref{tab:team for (p_k lor p_j)}.
Finally $T = \bigcup_{i \in I} T_{E_i}$.
\begin{figure}
    \centering
    $
    \begin{array}{ccccc}\toprule
        T_{E_i} & x & y & z & w \\ \midrule
        s_{i_1} & p_k & 1 & i & 0 \\
        s_{i_2} & p_j & 0 & i & 1\\\bottomrule
    \end{array}
    \qquad\quad
    \begin{array}{cccc}\toprule
         T_{E_i} & x & y & z \\\midrule
         s_{i_1} & p_k & i & 1_k \\
         s_{i_2} & p_j & i & 0_j \\\bottomrule
    \end{array}       
    \qquad
    \begin{array}{cccc}\toprule
        T_{E_i} & x & y & z \\\midrule
        s_{i_1} & i & p_k & 1 \\
        s_{i_2} & i & p_j & 0 \\\bottomrule
    \end{array}
    $\\$\,$
    
    \caption{Teams for $(p_k \lor \lnot p_j)$ in \Cref{prop:dep(x y) lor dep(z w) NL-hard} (left) and \Cref{thm:dep(x z) lor dep(y z) NL-complete} (middle/right).}
    \label{tab:team for (p_k lor p_j)}
\end{figure}

\begin{restatable}{theorem}{NLhardness}\label{thm:dep(x z) lor dep(y z) NL-complete}\label{thm:dep(x y) lor dep(y z) NL-complete}
    The following statements are true:
    \begin{enumerate}[(1.)] 
        \item The problem $\MC(\dep(x,z) \lor \dep(y,z))$ is $\NL$-complete.
        \item The problem $\MC(\dep(x,y) \lor \dep(y,z))$ is $\NL$-complete.
    \end{enumerate}
\end{restatable}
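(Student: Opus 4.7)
The upper bound is immediate from \Cref{prop:lor of 2 coherent in NL}, since every dependence atom is $2$-coherent. For the matching lower bound in both items, my plan is to reduce from $\TwoSAT$ (which is $\NL$-complete) by reusing the team templates drawn in the middle and right blocks of \Cref{tab:team for (p_k lor p_j)}.

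For part~(1), given a 2CNF $\theta=\bigwedge_{i\in I} E_i$ with variables $p_1,\dots,p_m$, I would build $T=\bigcup_i T_{E_i}$ in which, for each literal $\ell$ of $E_i$ over variable $p_k$ with satisfying value $v\in\{0,1\}$, the corresponding row is $(x,y,z)=(p_k,\,i,\,v_k)$; the subscripted value $v_k$ packages the pair (value, variable-index) so that distinct variables always yield distinct $z$-entries. The intuition is that $\dep(x,z)$ forces any sub-team $T_1\models\dep(x,z)$ to assign a single $z$-value per $x$-value, so $T_1$ encodes a consistent partial truth-assignment, whereas the two rows of each $T_{E_i}$ share $y=i$ but carry incompatible $z$-values, so any $T_2\models\dep(y,z)$ contains at most one row from each $T_{E_i}$. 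Consequently, in any split $T=T_1\cup T_2$ every clause contributes a row to $T_1$, and that row witnesses a literal satisfied by the assignment read off $T_1$. The converse is straightforward: given a satisfying $\alpha$, put each row whose literal is true under $\alpha$ into $T_1$ and the remaining (at most one per clause) into $T_2$.

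Part~(2) then proceeds analogously with the right-hand team template: rows take the form $(i,p,v)$. The roles of the two atoms swap, so that $\dep(y,z)$ enforces consistency of the truth-assignment encoded by $T_2$, while $\dep(x,y)$ restricts $T_1$ to at most one row per clause-index because the two rows of each $T_{E_i}$ share $x=i$ but carry different $y$-values. The same bijection between satisfying assignments of $\theta$ and legal splits closes the argument.

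The only delicate point, which I expect to be the main obstacle, is the degenerate case of clauses with repeated variables ($(p\vee p)$, $(\neg p\vee\neg p)$, $(p\vee\neg p)$): in (2) these clauses collapse the $y$-values of the two rows, so $\dep(x,y)$ imposes no restriction and the ``at most one row per clause in $T_1$'' step breaks, and in (1) a similar issue arises because a clause that leaves a variable fixed may contribute a row that escapes into $T_2$ without forcing the intended assignment. I would sidestep this uniformly by reducing instead from the restriction of $\TwoSAT$ in which every clause has two distinct variables; standard log-space preprocessing (drop tautologies $(p\vee\neg p)$, and replace each $(p\vee p)$ by $(p\vee q)\wedge (p\vee\neg q)$ for a fresh $q$, analogously for $(\neg p\vee\neg p)$) preserves satisfiability and is well-known to remain $\NL$-hard.
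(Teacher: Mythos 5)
Your proposal is correct and follows essentially the same route as the paper: membership via Proposition~\ref{prop:lor of 2 coherent in NL}, and hardness by the same $\TwoSAT$ reduction using exactly the team templates of \Cref{tab:team for (p_k lor p_j)}, with the subscripted $z$-values $v_k$ playing the same role and the same preprocessing of degenerate clauses (the paper normalizes to distinct literals per clause; your normalization to distinct variables is a harmless strengthening). Your explicit treatment of part~(2) and of the repeated-variable corner cases is slightly more detailed than the paper's ``works analogously,'' but the underlying argument is identical.
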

\begin{proof}
    Membership follows directly from \Cref{prop:lor of 2 coherent in NL}.
    We now give the proof of $\NL$-hardness for \itemstyle{(1.)}. 
    The proof works analogous for \itemstyle{(2.)}.
    
    We show $\TwoSAT \leqlogm \MC(\dep(x, z) \lor \dep(y, z))$.
    Suppose $\theta(p_1, \dots, p_m)$ is an instance of $\TwoSAT$ of the form $\bigwedge_{i \in I} E_i$, with conjuncts $E_i = (\ell_{i_1} \lor \ell_{i_2})$, where $\ell_{i_1}, \ell_{i_2}$ are different literals, i.\,e., $\ell_{i_1} \neq \ell_{i_2}$.
    This can easily be archived by replacing conjuncts of the form $(\ell \lor \ell)$ with $(\ell \lor q)$ and $(\ell \lor \lnot q)$, where $q\not\in \{p_1, \dots, p_m\}$ is a new proposition.

    We will construct a team $T$, such that the following are equivalent:
    \begin{itemize}
        \item $T \models \dep(x, z) \lor \dep(y, z)$.
        \item $\theta(p_1, \dots, p_m)$ is satisfiable.
    \end{itemize}
    For each conjunct $E_i$, we create a team $T_{E_i}$ with two assignments $s_{i_1}$ and $s_{i_2}$ of domain $\{x, y, z\}$, that encode the literals of $E_i$.
    To this end, variable $x$ ranges over the propositions $\{p_1, \dots, p_m\}$, variable $y$ ranges over clause (index) $I$, and variable $z$ ranges over $\{\,0_k, 1_k \mid 1 \leq k \leq m\,\}$, such that each proposition has its own two truth values.
    This is necessary to ensure that $z$ has two different values in each $T_{E_i}$.
    Thus, the dependence atom $\dep(y, z)$ makes sure we have to choose at least one of the assignments from each $T_{E_i}$ into the subset of $T$, that will satisfy $\dep(x, z)$, thereby encoding the assignment that satisfies $\theta$.

    Each literal $\ell_{i_j}$ in $E_i$, gives rise to one assignment.
    For example, the team $T_{E_i}$ for a clause $(p_k \lor p_j)$ is the one in the middle in \Cref{tab:team for (p_k lor p_j)}.
    
    Now $T$ is the union $\bigcup_{i \in I} T_{E_i}$.
    Suppose $\theta(p_1, \dots, p_m)$ is satisfiable.
    Then there exists an assignment $F\colon \{p_1, \dots, p_m\} \to \{0, 1\}$, that satisfies $\theta(p_1, \dots, p_m)$.
    We write $F(p_k)_k$ to denote the truth values $\{0_k, 1_k\}$ used by variable $z$. For example, if $F(p_k) = 0$, then $F(p_k)_k = 0_k$.
    Define the partition of the team $T$ into two sets in the following way:
    \begin{align*}
        T_1 &= \{s \in T \mid s(x) = p_k \text{ and } s(z) = F(p_k)_k\} \\
        T_2 &= T \setminus T_1
    \end{align*}
    The assignments in $T$ that agree with the assignment $F$ are chosen to $T_1$.
    Since $F$ evaluates $\bigwedge_{i \in I} E_i$ true, it satisfies every conjunct $E_i$.
    Thus $T_1$ contains at least one assignment from each $T_{E_i}$.
    Thus there will be at most one tuple from each $T_{E_i}$ left to $T_2$.
    This $T_2$ trivially satisfies $\dep(y, z)$ since all tuples in $T_2$ disagree on $y$, i.\,e., for all $s, t \in T_2$ it is true that $s(y) \neq t(y)$, if $s \neq t$.

    Next we will show that $T_1$ satisfies $\dep(x, z)$.
    Let $s, t \in T_1$, such that $s(x) = t(x) = p_k$.
    Then by the definition of $T_1$ it holds that $s(y) = F(p_k)_k = t(y)$ holds.
    Thus $T_1 \models \dep(x, z)$.

    The other direction:
    Suppose $T \models \dep(x, z) \lor \dep(y, z)$.
    Then there is a partition of $T$ into $T_1$ and $T_2$, such that $T_1 \models \dep(x, z)$ and $T_2 \models \dep(y, z)$.
    We will define the assignment $F \colon \{p_1, \dots, p_m\} \to \{0, 1\}$ in the following way:
    \begin{itemize}
        \item If there exists $s \in T_1$, such that $s(x) = p_k$, then $F(p_k)_k = s(z)$.
        \item Otherwise, if for all $s \in T_1$ it holds $s(x) \neq p_k$, then $F(p_k) = 1$.
    \end{itemize}
    Let us show that $F$ is a function, which satisfies $\theta(p_1, \dots, p_m)$:
    \begin{enumerate}
        \item Clearly, $\Dom(F) = \{p_1, \dots, p_m\}$ and $\Range(F) = \{0, 1\}$.
        \item $F$ is a function:
        Let $p_k \in \{p_1, \dots, p_m\}$.
        Suppose there exists $s, t \in T_1$, such that $s(x) = t(x) = p_k$ holds.
        Since $T_1 \models \dep(x, z)$ holds, it follows that $s(z) = t(z)$ holds.
        Suppose there are no $s \in T_1$, such that $s(x) = p_k$.
        Then by definition of $F$ it holds that $F(p_k) = 1$.
        \item $F$ satisfies $\theta(p_1, \dots, p_m)$:
        Note that $y$ is constant and $z$ is assigned different values by each tuple in each $T_{E_i}$.
        Thus $T_1$ contains at least one of the tuples from each $T_{E_i}$.
        Let $s \in T_{E_i}$, such that $s \in T_1$.
        Recall, that in every assignment the values $s(z)$ encodes the truth value of $s(x)$ such that $E_i$ is satisfied.
        Since $s$ agrees with $F$, it holds that $F(\ell_{i_j})_{i_j} = s(z)$, which implies that $F(E_i) = 1$.
    \end{enumerate}
    Each conjunct $E_i$, of $\theta$ gives rise to a constant size team of two assignments with domain $\{x, y, z\}$.
    Thus the team $T$ can be constructed in $\LOGSPACE$ for each $\theta$. 
\end{proof}
\subsection{L-Completeness}
This section will cover 
the cases in which the model checking problem for disjunctions of two unary dependence atoms is complete for logarithmic space.

If  $T$ is a team with domain $X$ and $x$ is a variable in $X$, 
then we will use the notation $\rngrestr{T}{x}\coloneqq\Range(T\upharpoonright\{x\})$ to denote  the set of values in the range of $T$ that $x$ takes.

\begin{definition}
    Let $T$ be a team with domain $\{x,y\}$ such that $\rngrestr{T}{x} \cap \rngrestr{T}{y} = \emptyset$.
    The \emph{undirected graph of $T$} is $G_T \coloneqq (V_T, E_T)$, where 
    \begin{align*}
        V_T &= \Range(T), \qquad E_T = \bigl\{\,\{s(x),s(y)\}\,\big|\, s\in T\,\bigr\}.
    \end{align*}
\end{definition}
Notice that, by definition, $G_T$ is a bipartite graph. 
The next lemma provides a graph-based interpretation of the condition under which a team $T$ satisfies the formula $\dep(x,y)\lor \dep(y,x)$.%
\begin{restatable}{lemma}{satIffDirected}\label{lem:sat iff directed}
    Let $T$ be a team with domain $\{x,y\}$ such that $\rngrestr{T}{x} \cap \rngrestr{T}{y} = \emptyset$.
    The following statements are equivalent:
    \begin{enumerate}[(1.)]
        \item $T \models \dep(x,y) \lor \dep(y,x)$.
        \item $G_T$ can be transformed into a directed graph $G'_T = (V_T, E'_T)$ such that 
        \begin{enumerate}[(a)]
            \item each node has at most one out-going edge, and 
            \item for each $\{u, v\} \in E_T$ either $(u,v) \in E'_T$ or $(v,u) \in E'_T$.
        \end{enumerate}
        \item For each connected component $C = (V_C, E_C)$ of $G_T$, we have that $|E_C| \le |V_C|$. 
    \end{enumerate}    
\end{restatable}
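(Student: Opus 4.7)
I would prove $(1)\Leftrightarrow(2)$ and $(2)\Leftrightarrow(3)$ separately, which chain to give the lemma. For $(1)\Rightarrow(2)$, fix a split $T = T_1 \cup T_2$ with $T_1 \models \dep(x,y)$ and $T_2 \models \dep(y,x)$. For each $s \in T_1$ orient the edge $\{s(x),s(y)\}$ as $(s(x),s(y))$, and for each $s \in T_2$ as $(s(y),s(x))$. Since $\rngrestr{T}{x} \cap \rngrestr{T}{y} = \emptyset$, every vertex is purely an $x$-value or purely a $y$-value, so the matching dependence atom forces at most one outgoing edge there, giving (2)(a); condition (2)(b) holds by construction. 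Conversely, given $E'_T$ as in (2), set $T_1 := \{\, s \in T \mid (s(x),s(y)) \in E'_T\,\}$ and $T_2 := \{\, s \in T \mid (s(y),s(x)) \in E'_T\,\}$. Then (2)(b) gives $T = T_1 \cup T_2$, while (2)(a) forces any two assignments in $T_1$ agreeing on $x$ to produce the same outgoing edge at that $x$-vertex, hence to agree on $y$; the argument for $T_2$ is symmetric.

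For $(2)\Leftrightarrow(3)$: in one direction, work component-wise. In any component $C$, writing $d^+(v)$ for the out-degree in $E'_T$, condition (2)(a) gives $\sum_{v \in V_C} d^+(v) \le |V_C|$, while (2)(b) gives $|E_C| \le \sum_{v \in V_C} d^+(v)$, yielding $|E_C| \le |V_C|$. For the converse, each connected component $C$ satisfies $|V_C|-1 \le |E_C| \le |V_C|$ (the lower bound from connectedness), so $C$ is either a tree or unicyclic. For a tree, pick any root $r$ and orient every edge toward $r$. For a unicyclic component, orient the unique cycle consistently as a directed cycle, and orient each maximal subtree hanging off a cycle-vertex $c$ toward $c$. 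In both cases every vertex has out-degree at most $1$ and every edge is oriented, so (2) holds.

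\textbf{Main obstacle.} The most delicate step is $(1)\Rightarrow(2)$, where the hypothesis $\rngrestr{T}{x} \cap \rngrestr{T}{y} = \emptyset$ is essential: without it, a single vertex could simultaneously act as an $x$-source (via some $s \in T_1$) and a $y$-source (via some $s' \in T_2$), producing two outgoing edges and violating (2)(a). Once this typing point is handled, the remaining implications reduce to the standard observation that a connected graph with at most $|V_C|$ edges is either a tree or has a unique cycle, both of which admit the stated orientations.
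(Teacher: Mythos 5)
Your proposal is correct and follows essentially the same route as the paper: the same orientation-from-split construction for $(1)\Leftrightarrow(2)$ (with the same use of the disjointness hypothesis to type each vertex as an $x$- or $y$-value), and the same tree/unicyclic case analysis for $(3)\Rightarrow(2)$. The only cosmetic difference is that your $(2)\Rightarrow(3)$ uses an out-degree sum $|E_C|\le\sum_{v\in V_C}d^+(v)\le|V_C|$ where the paper does a slightly less tidy case distinction; both are fine.
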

\begin{proof}
    We will prove the equivalences as follows. 
    \begin{description}
        \item[(1.) $\Rightarrow$ (2.)] Assume, w.l.o.g., $T = T_1 \uplus T_2$, such that $T_1 \models \dep(x,y)$ and $T_2 \models \dep(y,x)$.
        Then the directed graph $G'_T = (V_T, E'_T)$ with 
        \[
            E'_T = \{\,(s(x),s(y)) \mid s \in T_1\,\} \cup \{\,(s(y),s(x)) \mid s \in T_2\,\}
        \]
        has properties (a) and (b). 

        For (a), wrongly assume, w.l.o.g., that there are two edges $e_1=(a,b)$ and $e_2=(a,b')$ with $b\neq b'$ and $a,b,b'$ values in $\Range(T)$ (the same argument could be made for the second component being $a$ and the first components being different). 
        Now, these edges correspond to assignments $s(x)=a$, $s(y)=b$, and $s'(x)=a$, $s'(x)=b'$. 
        Both $s,s'\in T_1$ and would violate $\dep(x,y)$ which is a contradiction.

        For (b), notice that each $\{u,v\}$ corresponds to a single assignment $s\in T$. 
        By assumption, $s$ is either in $T_1$ or $T_2$, whence it has to appear in $E'_T$ in either direction.
        \item[(2.) $\Rightarrow$ (1.)] Let $G'_T = (V_T, E'_T)$ be a directed graph, where each node has at most one out-going edge.
        Define $T_1 = \{s \in T \mid (s(x), s(y)) \in E'_T\}$ and $T_2 = \{s \in T \mid (s(y), s(x)) \in E'_T\}$.
        Then, $T_1 \models \dep(x, y)$, because each $s(x)$ appears only once by assumption, so the dependency is not violated.
        By analogous arguments we have that $T_2 \models \dep(y,x)$.
        \item[(2.) $\Rightarrow$ (3.)] We make a case distinction.
        If each node in $G'_T$ has exactly one out-going edge, then $|V_T| = |E'_T| \overset{\text{\tiny(b)}}{=} |E_T|$ and likewise $|E_C| = |V_C|$ for each connected component $C$ of $G_T$.
        Otherwise, if some nodes have no out-going edge, then $|E_C| < |V_C|$.
        \item[(3.) $\Rightarrow$ (2.)] A connected component in $G_T$ has at least $|V_C| - 1$ edges (otherwise it would not be connected).
        So there are only two cases:
        \begin{enumerate}[(i)]
            \item \label{enum:cc is tree} If $|E_C| = |V_C| - 1$, then $C$ is a (spanning-)tree. 
            Define $G'_T$ by orienting the edges from the leaves to the root, i.e., bottom-up.
            Since each node has one parent, each node has one outgoing edge (except for the root which has no outgoing edge).
            \item If $|E_C| = |V_C|$, then $C$ contains \emph{exactly one} cycle $\mathcal C$, because $C$ is connected and has one more edge than its spanning-tree.
            Choose any orientation of $\mathcal C$ (clockwise or counter-clockwise).
            Afterwards, all nodes in $C\setminus\mathcal C$ (hence outside of the cycle $\mathcal C$) form tree-like connected components and we can proceed as in {\sffamily\bfseries(\ref{enum:cc is tree})}: 
            There is one node directly connecting to $\mathcal C$ which will become the root, and its outgoing edge connects to a node in the cycle $\mathcal C$.\qedhere
        \end{enumerate}
    \end{description}
\end{proof}
A visual example for $G_T$ and the equivalence of \itemstyle{(1.)} and \itemstyle{(2.)} are depicted in \Cref{fig:ex:G_T}.
\begin{figure}
    \centering
    $
    \begin{array}{ccc}\toprule
        T & x & y \\\midrule
        s_1 & 0 & \hat 0 \\
        s_2 & 0 & \hat 1 \\
        s_3 & 1 & \hat 0 \\
        s_4 & 1 & \hat 1 \\\bottomrule
    \end{array}
    $\quad$\leftrightsquigarrow$\quad
    \begin{tikzpicture}[scale=2, baseline=(current bounding box.center)]
        \node (0) at (0,0) {$0$};
        \node (1) at (0,-1) {$1$};
        \node (h0) at (1,0) {$\hat 0$};
        \node (h1) at (1,-1) {$\hat 1$};

        \draw (0) -- node[above] {$s_1$} (h0);
        \draw (0) -- node[near start, left] {$s_2$} (h1);
        \draw (1) -- node[near start, left] {$s_3$} (h0);
        \draw (1) -- node[below] {$s_4$} (h1);
    \end{tikzpicture}
    \quad$\leftrightsquigarrow$\quad
    \begin{tikzpicture}[scale=2, baseline=(current bounding box.center)]
        \node (0) at (0,0) {0};
        \node (1) at (0,-1) {1};
        \node (h0) at (1,0) {$\hat 0$};
        \node (h1) at (1,-1) {$\hat 1$};

        \draw[-stealth'] (0) -- node[above] {$s_1$} (h0);
        \draw[stealth'-] (0) -- node[near start, left] {$s_2$} (h1);
        \draw[stealth'-] (1) -- node[near start, left] {$s_3$} (h0);
        \draw[-stealth'] (1) -- node[below] {$s_4$} (h1);
    \end{tikzpicture}
    \quad$\leftrightsquigarrow$\quad
    $
    \begin{array}{ccc}\toprule
        T & x & y \\\midrule
        s_1 & 0 & \hat 0 \\
        s_4 & 1 & \hat 1 \\\hdashline
        s_2 & 0 & \hat 1 \\
        s_3 & 1 & \hat 0 \\\bottomrule
    \end{array}
    $
    \caption{Example depicting the construction of $G_T$ from team $T$ on the left. 
    Followed by the transformation into directed graph $G'_T$, where each node has at most one out-going edge.
    This induces the split $\{s_1, s_4\} \models \dep(x, y)$ and $\{s_2, s_3\} \models \dep(y, x)$ of $T$ shown on the right.}
    \label{fig:ex:G_T}
\end{figure}
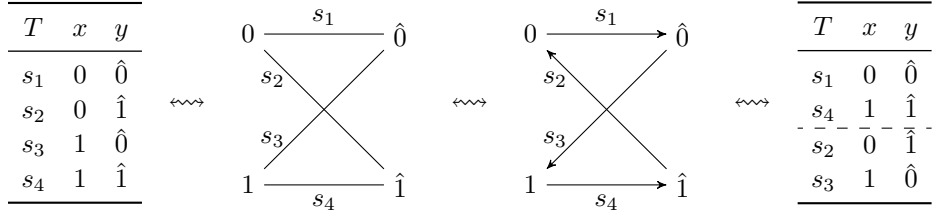

\begin{theorem}\label{thm:dep(x y) lor dep(y x) in L}
   The problem $\MC(\dep(x,y) \lor \dep(y,x))$ is in $\LOGSPACE$.
\end{theorem}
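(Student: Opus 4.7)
My plan is to turn the graph-theoretic characterisation of \Cref{lem:sat iff directed}, specifically condition~(3.), into a logspace algorithm. By locality (\Cref{lem:properties}), I may restrict the input team $T$ to the free variables $\{x,y\}$, and this restriction is trivially computable in $\LOGSPACE$. \Cref{lem:sat iff directed} additionally assumes $\rngrestr{T}{x}\cap\rngrestr{T}{y}=\emptyset$, which I enforce by relabelling each value $a$ appearing in the $x$-column as the pair $(a,x)$ and each value $b$ appearing in the $y$-column as $(b,y)$. This relabelling preserves the equality patterns within each column, hence satisfaction of both $\dep(x,y)$ and $\dep(y,x)$, while guaranteeing the desired disjointness; it is obviously computable in logspace.

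Having reduced to the setting of the lemma, it remains to check condition~(3.): every connected component $C$ of the undirected graph $G_T$ satisfies $|E_C|\le |V_C|$. For this I invoke Reingold's theorem, which places undirected $s$-$t$ connectivity in $\LOGSPACE$. The algorithm iterates over the vertices $v\in V_T$ (the tagged values occurring in the input) in a fixed canonical order, e.g.\ the order of first appearance in the encoding of $T$. For each $v$, it first tests whether $v$ is the minimum of its connected component by issuing, for each strictly smaller vertex $u\in V_T$, an undirected reachability query between $v$ and $u$ in $G_T$. If $v$ is a component representative, the algorithm then computes $|V_C|$ by counting vertices reachable from $v$ and $|E_C|$ by counting those assignments $s\in T$ whose endpoint $s(x)$ is reachable from $v$; both counters fit in $O(\log|T|)$ bits. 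Reject as soon as some representative yields $|E_C|>|V_C|$; otherwise accept.

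Correctness follows immediately from \Cref{lem:sat iff directed}. Space-wise, the algorithm stores only a constant number of pointers into the input, two logarithmic counters, and the logspace workspace of the reachability subroutine, so the whole procedure runs in $\LOGSPACE$. The only non-trivial ingredient is Reingold's theorem; the rest is straightforward bookkeeping, which is why I expect no serious obstacle beyond being careful that counting per component does not require more than logarithmic space and that representatives are identified without duplication.
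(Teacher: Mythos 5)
Your proposal is correct and follows essentially the same route as the paper: both verify condition~(3.) of \Cref{lem:sat iff directed} component by component, using Reingold's theorem for undirected reachability and logarithmic counters for $|V_C|$ and $|E_C|$. The only differences are bookkeeping—your explicit tagging of values to enforce $\rngrestr{T}{x}\cap\rngrestr{T}{y}=\emptyset$ (which the paper handles implicitly in its encoding of $G_T$) and your use of component representatives with direct edge counting, where the paper simply checks every vertex and counts edges via summed degrees halved.
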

\begin{proof}
    \Cref{alg: dep(xy) lor dep(yx)} decides $T \models \dep(x,y) \lor \dep(y,x)$ requiring only logarithmic space.

    We assume the input to the algorithm to be the graph $G_T$, which is possible since its edge set is, depending on the used encoding, the same as the team $T$. 
    The correctness follows immediately from \Cref{lem:sat iff directed}, i.e., \Cref{alg: dep(xy) lor dep(yx)} returns \texttt{true} if and only if $|E_C| \le |V_C|$ for all connected components $C$ of $G_T$.
    The algorithm iterates over all nodes $v$ in $G_T$ and checks that the size of the corresponding connected component (so its number of nodes) is not smaller than its number of edges. 
    The size count, $c$, is achieved by counting the number of reachable nodes. In parallel, the sizes of their respective neighborhoods are added up in a different counter, $d$.
    Since this counts each edge twice, halving $d$ and comparing to $c$ suffices.

    The amount of space required by the algorithm is as follows. 
    Computing the number of nodes in a connected component can be done in logarithmic space in $|G_T|$ (which is in $O(|T|)$) using undirected reachability (which is well-known to be in $\LOGSPACE$~by Reingold's theorem~\cite{DBLP:journals/jacm/Reingold08}).
    The algorithm stores an index of the current node $v$ and an index of an auxiliary node $u$, hence both are, by binary encoding, in $O(\log|T|)$.
    Furthermore, two counters are used: one to count the size of the connected component, $c$, and another to count the number of edges in the connected component, $d$. 
    Both counters are, again, in $O(\log |T|)$.
    Storing these values together requires only logspace in the input size.
    The claimed result applies, as $\LOGSPACE^\LOGSPACE = \LOGSPACE$ (that is to say, $\LOGSPACE$ is low for itself).
\end{proof}
\begin{algorithm}[t]
    \DontPrintSemicolon
    \caption{Model-checking $\dep(x,y) \lor \dep(y,x)$}
    \label{alg: dep(xy) lor dep(yx)}
    \SetKwInOut{Input}{input}
    \Input{Graph $G_T$\tcp*[f]{Space}}
    \ForEach(\tcp*[f]{$\log |V|$}){$v \in V_T$}{
        $c \gets$ 0 \tcp*{$\log |V|$}
        $d \gets $ number of incident edges of $v$ \tcp*{$\log |E|$}
        \ForEach(\tcp*[f]{$\log |V|$}){$u \in V_T\setminus\{v\}$ and $u$ is reachable by $v$}{
            $c \gets c + 1$, 
            $d \gets d\ + $ number of incident edges of $u$\;
        }
        \lIf(\tcp*[f]{$d$ counts the edges twice}){$\frac{d}{2} > c$}{
            \Return \texttt{false}
        }
    }
    \Return \texttt{true}\;
\end{algorithm}

\begin{restatable}{lemma}{lemRangedepxydepyx}\label{lem:team to big}
Let $T$ be a team with domain $\{x,y\}$. 
    If the size $|T| > |\rngrestr{T}{x})| + |\rngrestr{T}{y}|$, then $T \not\models \dep(x, y) \lor \dep(y, x)$.    
\end{restatable}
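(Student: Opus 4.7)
The plan is to reduce to the graph-theoretic characterisation of \Cref{lem:sat iff directed} and then apply a simple counting argument on connected components. The only subtlety is that \Cref{lem:sat iff directed} is stated under the assumption $\rngrestr{T}{x} \cap \rngrestr{T}{y} = \emptyset$, so I would first remove any overlap between the two ranges.

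First, I would handle a potential overlap of $\rngrestr{T}{x}$ and $\rngrestr{T}{y}$ by a tagging trick. Define a new team $T'$ over $\{x,y\}$ by setting, for each $s\in T$, $s'(x)\coloneqq (s(x),0)$ and $s'(y)\coloneqq (s(y),1)$. The map $s\mapsto s'$ is a bijection between $T$ and $T'$; moreover, $s'_1(x)=s'_2(x)$ iff $s_1(x)=s_2(x)$ and similarly for $y$. Hence $T_1\models \dep(x,y)$ iff $T'_1\models\dep(x,y)$ for every sub-team (and analogously for $\dep(y,x)$), so the satisfaction of $\dep(x,y)\lor\dep(y,x)$ is preserved, and the cardinalities $|T|,|\rngrestr{T}{x}|,|\rngrestr{T}{y}|$ are preserved as well. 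Now the $x$- and $y$-ranges of $T'$ are disjoint, so without loss of generality I assume $\rngrestr{T}{x}\cap\rngrestr{T}{y}=\emptyset$ from here on.

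Second, consider the bipartite graph $G_T=(V_T,E_T)$. Since the domain of $T$ is $\{x,y\}$, assignments $s\in T$ are in one-to-one correspondence with pairs $(s(x),s(y))$, and therefore also with the edges $\{s(x),s(y)\}\in E_T$ (no multi-edges arise, as two assignments yielding the same edge would be identical). Consequently $|E_T|=|T|$ and $|V_T|=|\rngrestr{T}{x}|+|\rngrestr{T}{y}|$, and the hypothesis gives $|E_T|>|V_T|$.

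Third, apply pigeonhole on connected components of $G_T$. Writing the vertex and edge sets as disjoint unions over connected components $C$, we get
\[
    \sum_C |E_C| \;=\; |E_T| \;>\; |V_T| \;=\; \sum_C |V_C|,
\]
so there exists a component $C^*$ with $|E_{C^*}|>|V_{C^*}|$. By \Cref{lem:sat iff directed} (the contrapositive of the implication \itemstyle{(1.)}$\Rightarrow$\itemstyle{(3.)}), this rules out $T\models\dep(x,y)\lor\dep(y,x)$, and the lemma follows. The main (and only) thing to check carefully is the satisfaction-preserving tagging step; the rest is a direct appeal to \Cref{lem:sat iff directed} combined with a counting argument.
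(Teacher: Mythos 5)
Your proof is correct, but it takes a genuinely different route from the paper's. The paper argues directly on the team: if $T = T_1 \cup T_2$ with $T_1 \models \dep(x,y)$ and $T_2 \models \dep(y,x)$, then, because the domain of $T$ is exactly $\{x,y\}$ and hence an assignment is determined by its pair of values, two distinct assignments in $T_1$ must differ on $x$, so $|T_1| \le |\rngrestr{T}{x}|$, and symmetrically $|T_2| \le |\rngrestr{T}{y}|$; thus $|T| \le |T_1| + |T_2| \le |\rngrestr{T}{x}| + |\rngrestr{T}{y}|$, contradicting the hypothesis. You instead route through \Cref{lem:sat iff directed}: you first make the two ranges disjoint by tagging (a step the paper never needs, since it does not use $G_T$ here), identify assignments with edges so that $|E_T| = |T|$ and $|V_T| = |\rngrestr{T}{x}| + |\rngrestr{T}{y}|$, and then apply pigeonhole over connected components together with the contrapositive of \itemstyle{(1.)} $\Rightarrow$ \itemstyle{(3.)}. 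Both arguments ultimately rest on the same observation --- that on a two-variable domain, a sub-team satisfying $\dep(x,y)$ injects into $\rngrestr{T}{x}$, which is also exactly why no multi-edges arise in your graph --- but the paper extracts it directly in two lines, whereas your version buys reuse of the already-established graph characterization at the cost of the tagging detour and the component-wise counting. Two small points to tidy in your write-up: since $V_T$ is defined as $\Range(T)$, you should state that you take the range of the tagged team to be precisely the set of values occurring in it (otherwise unused elements would appear as isolated vertices and could inflate $|V_T|$ beyond $|\rngrestr{T}{x}| + |\rngrestr{T}{y}|$, though even then the pigeonhole survives by restricting to components containing an edge); and you should say explicitly that the tagging bijection carries splits of $T$ to splits of $T'$ and back, which is what makes the satisfaction of the disjunction (not just of each atom on corresponding sub-teams) invariant.
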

\begin{proof}
    Assume $T \models \dep(x, y) \lor \dep(y, x)$ and $|T| > |\rngrestr{T}{x}| + |\rngrestr{T}{y}|$, then there exists a split $T = T_1 \cup T_2$ such that $T_1 \models \dep(x, y)$ and $T_2 \models \dep(y, x)$.
    Clearly, $|T_1| \le |\rngrestr{T}{y}|$ and $|T_2| \le |\rngrestr{T}{y}|$, because otherwise each of $T_1$ and $T_2$ would not satisfy the respective dependence atom. 
    That is true, because then there would be a pair $s, t \in T_1$ with $s(x) = t(x)$ and $s(y) \neq t(y)$, respectively $s, t \in T_2$ with $s(y) = t(y)$ and $s(x) \neq t(x)$.
    Furthermore, we have that $|T| \le |T_1| + |T_2|$. 
    As a result, we get $|T| \le |T_1| + |T_2| \le |\rngrestr{T}{x}| + |\rngrestr{T}{y}|$, which is a contradiction to $|T| > |\rngrestr{T}{x}| + |\rngrestr{T}{y}|$.
\end{proof}

In the following, we will utilize the definition of \emph{first-order reductions} as defined by Immerman~\cite{immerman99} and Dahlhaus~\cite{DBLP:conf/lam/Dahlhaus83}.

\newcommand{\UFA}{\ensuremath\mathrm{UFA}}
\begin{theorem}\label{thm:dep(x y) lor dep(y x) L-hard}
    The problem $\MC(\dep(x, y) \lor \dep(y, x))$ is $\LOGSPACE$-hard under first-order reductions.
\end{theorem}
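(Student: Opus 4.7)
The plan is to exhibit a first-order reduction from the complement of Undirected Forest Accessibility ($\UFA$) — which is $\LOGSPACE$-complete under first-order reductions — to $\MC(\dep(x,y)\lor\dep(y,x))$. Given a forest $F = (V, E)$ with distinguished vertices $s, t \in V$, I will construct a team $T_F$ of domain $\{x, y\}$ such that $T_F \models \dep(x, y) \lor \dep(y, x)$ if and only if $s$ and $t$ lie in different connected components of $F$. By \Cref{lem:sat iff directed}, it suffices to engineer $G_{T_F}$ so that every connected component $C$ satisfies $|E_C| \le |V_C|$ precisely when $s$ and $t$ are disconnected in $F$.

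The construction. The $x$-side of $G_{T_F}$ consists of one node $v_x$ for each $v \in V$; the $y$-side consists of one node $e_y$ per edge $e \in E$, together with two additional nodes $\star_1, \star_2$. For each edge $e = \{u, v\} \in E$, put two assignments into $T_F$ that realize the graph-edges $\{u_x, e_y\}$ and $\{v_x, e_y\}$, so that $G_{T_F}$ restricted to $V \cup E$ is the subdivision of $F$. Additionally, put four assignments into $T_F$ realizing $\{s_x, \star_i\}$ and $\{t_x, \star_i\}$ for $i = 1, 2$. Disjointness of the $x$- and $y$-ranges is automatic from the labelling.

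Correctness. Every connected component of $G_{T_F}$ disjoint from $\{\star_1, \star_2\}$ is the subdivision of a tree-component of $F$, hence itself a tree satisfying $|V_C| = |E_C| + 1$. For the distinguished component $C^*$ containing $\star_1, \star_2, s_x, t_x$, there are two cases. If $s, t$ lie in the same tree of $F$, then attaching $\star_1, \star_2$ contributes $2$ vertices and $4$ edges to what was already a tree, giving $|V_{C^*}| - |E_{C^*}| = 1 + 2 - 4 = -1$, so $|E_{C^*}| > |V_{C^*}|$ and the formula fails. If $s, t$ lie in different trees, then $\star_1$ merges two trees into one (net $+1$ vertex, $+2$ edges, balance drops from $2$ to $1$, still a tree) and $\star_2$ contributes $+1$ vertex and $+2$ edges (balance drops to $0$), leaving $|E_{C^*}| = |V_{C^*}|$ and the formula holds. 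Hence $T_F \models \dep(x,y)\lor\dep(y,x)$ exactly when $s$ and $t$ are disconnected in $F$.

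Finally, the map $F \mapsto T_F$ is a first-order interpretation: the output universe can be taken as a definable subset of a fixed tuple-power of $V$, with $\star_1, \star_2$ realised as two distinguishable constant tuples, and the binary team-relation is a quantifier-free disjunction of one clause describing the subdivision pairs $\{(v, e) \mid v \in e \in E\}$ and one clause enumerating the four star-edges. The main obstacle in this proof is not logical complexity but identifying the correct gadget: a single extra $\star$ would only raise $C^*$ from a tree to $|E_{C^*}| = |V_{C^*}|$, which still satisfies the formula and therefore fails to distinguish connected from disconnected $s, t$; it is precisely the second star $\star_2$ that tips the edge-vertex balance by one further unit exactly in the connected case, producing a component with strictly more edges than vertices.
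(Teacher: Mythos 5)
Your proposal is correct, and it follows the same high-level strategy as the paper: a first-order reduction from the complement of $\UFA$ in which two fresh elements are attached to both distinguished vertices so that connectivity of $u$ and $v$ tips an edge--vertex balance past the threshold of \Cref{lem:sat iff directed}. The difference lies in how the graph is encoded as a team and which lemma carries the correctness argument. The paper puts \emph{both} orientations of every edge into the team (for $\{a,b\}$ it adds assignments $(a,b)$ and $(b,a)$), so $\rngrestr{T}{x}$ and $\rngrestr{T}{y}$ coincide and \Cref{lem:sat iff directed} is not directly applicable; it therefore proves the forward direction by exhibiting an explicit partition (orienting each tree toward a chosen root) and the backward direction by the cardinality bound of \Cref{lem:team to big}. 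You instead encode the \emph{subdivision} of the forest, placing vertices on the $x$-side and edges plus the two stars on the $y$-side, which keeps the ranges disjoint and lets condition (3.) of \Cref{lem:sat iff directed} settle both directions with a single per-component count; this makes the correctness argument shorter and more uniform, at the cost of a slightly larger output universe (vertices plus edges rather than vertices plus two constants). Two minor points you should make explicit: the reduction, like the paper's, tacitly assumes $u \neq v$ (for $u = v$ the star gadget degenerates to two edges and the balance argument no longer separates the cases, but this is harmless since $\UFA$ with $u = v$ is trivially a yes-instance); and the first-order interpretation should fix a concrete tuple encoding of the edge-nodes and stars, though this is routine and the paper is no more detailed on that point.
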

\begin{proof}
    We reduce from the complement of $\UFA$ (which is known to be $\LOGSPACE$-complete~\cite{DBLP:journals/jal/CookM87}):

    \problemdef{$\UFA$ --- Undirected Forest Accessibility}{acyclic undirected graph $G = (V, E)$, nodes $u$ and $v$}{is there is a path between $u$ and $v$}

    \begin{figure}
        \centering
        \resizebox{\linewidth}{!}{$        
            \begin{array}{ccc}\toprule
               T_{\{a,b\}} & x & y \\\midrule
               s_1 & a & b \\
               s_2 & b & a \\\bottomrule
            \end{array}
            \qquad
            \begin{array}{ccc}\toprule
               T_{\{\top,u\}} & x & y \\\midrule
               s_1 & \top & u \\
               s_2 & u & \top \\\bottomrule
            \end{array}
            \qquad
            \begin{array}{ccc}\toprule
               T_{\{\top,v\}} & x & y \\\midrule
               s_1 & \top & v \\
               s_2 & v & \top \\\bottomrule
            \end{array}
            \qquad
            \begin{array}{ccc}\toprule
               T_{\{\bot,u\}} & x & y \\\midrule
               s_1 & \bot & u \\
               s_2 & u & \bot \\\bottomrule
            \end{array}
            \qquad
            \begin{array}{ccc}\toprule
               T_{\{\bot,v\}} & x & y \\\midrule
               s_1 & \bot & v \\
               s_2 & v & \bot \\\bottomrule
            \end{array}       
        $}\\$\,$
        \caption{Team $T_{\{a, b\}}$ for edge $\{a, b\} \in E$ and teams $T_{\{\top, u\}}, T_{\{\top, v\}}, T_{\{\bot, u\}}, T_{\{\bot, v\}}$.}
        \label{tab:example teams for undirected edges}
    \end{figure}
    Suppose $\langle G=(V,E), u, v \rangle$ is an instance of $\UFA$.
    We will construct a team $T$, such that $T \models \dep(x, y) \lor \dep(y, x)$ if and only if there is \emph{no path} between $u$ and $v$ in $G$.

    For each undirected edge $\{a, b\} \in E$, create a team $T_{\{a, b\}}$ with two assignments $s$ and $t$, such that $s(x) = t(y) = a$ and $s(y) = t(x) = b$.
    Next, let $\top$ and $\bot$ be two elements such that $\{\top, \bot\} \cap V = \emptyset$.
    Create four more teams $T_{\{\top, u\}}, T_{\{\top, v\}}, T_{\{\bot, u\}}$, and $T_{\{\bot, v\}}$ as depicted in \Cref{tab:example teams for undirected edges}.
    Now, $T$ is the union $\bigcup_{\{a, b\} \in E} T_{\{a, b\}} \cup \{T_{\{\top, u\}}, T_{\{\top, v\}}, T_{\{\bot, u\}}, T_{\{\bot, v\}}\}$. 
    The team $T$ can be constructed using an $\FO$ query $I\colon \STRUC[\tau_g] \to \STRUC[\tau_{2t}]$, where $\tau_g = (E^2)$ is the vocabulary of graphs with edge relation $E$ and $\tau_{2t} = (T^2)$ is the vocabulary of teams with domain of size two, encoded via the binary relation $\Rel(T)$ (see,  \Cref{coherent-FO-reduction}).
    
    Now, we will show that $\langle G=(V,E), u , v\rangle\in\overline{\UFA}$ if and only if $T\models\dep(x, y) \lor \dep(y, x)$.

    ``$\Rightarrow$'': Assume that there is no path between $u$ and $v$ in $G$.
    Then take $u$ and $v$ as roots for trees corresponding to their acyclic connected components.
    For all other connected components choose any node as root.
    We partition $T$ into two teams as follows:
    \begin{align*}
        T_1 &= \{s \in T \mid s(y) \text{ is the parent node of } s(x)\} \\
        &\cup\{s \in T \mid (s(x), s(y)) = (u, \bot)\} \cup\{s \in T \mid (s(x), s(y)) = (v, \top)\}\\
        &\cup\{s \in T \mid (s(x), s(y)) = (\bot, v)\} \cup\{s \in T \mid (s(x), s(y)) = (\top, u)\}\\
        T_2 &= T \setminus T_1 = \{s \in T \mid (s(x), s(y)) = (t(y), t(x)), t \in T_1\}
    \end{align*}
    Observe that $T_2$ contains the ``reverse directions'' of $T_1$ (see \Cref{fig:reverse directions}).
    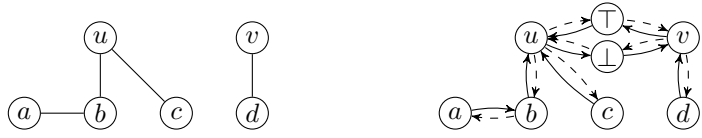
\begin{figure}
        \centering
        \begin{tikzpicture}[scale=1]
            \node[circle, draw, minimum size = 1.2em, inner sep = 0] at (0, 0) (a) {$a$};
            \node[circle, draw, minimum size = 1.2em, inner sep = 0] at (1, 0) (b) {$b$};
            \node[circle, draw, minimum size = 1.2em, inner sep = 0] at (2, 0) (c) {$c$};
            \node[circle, draw, minimum size = 1.2em, inner sep = 0] at (1, 1) (u) {$u$};
            \node[circle, draw, minimum size = 1.2em, inner sep = 0] at (3, 0) (d) {$d$};
            \node[circle, draw, minimum size = 1.2em, inner sep = 0] at (3, 1) (v) {$v$};

            \draw (a) -- (b) -- (u) -- (c);
            \draw (d) -- (v);
        \end{tikzpicture}
        \hspace{2cm}
        \begin{tikzpicture}[scale=1]
            \node[circle, draw, minimum size = 1.2em, inner sep = 0] at (0, 0) (a) {$a$};
            \node[circle, draw, minimum size = 1.2em, inner sep = 0] at (1, 0) (b) {$b$};
            \node[circle, draw, minimum size = 1.2em, inner sep = 0] at (2, 0) (c) {$c$};
            \node[circle, draw, minimum size = 1.2em, inner sep = 0] at (1, 1) (u) {$u$};
            \node[circle, draw, minimum size = 1.2em, inner sep = 0] at (3, 0) (d) {$d$};
            \node[circle, draw, minimum size = 1.2em, inner sep = 0] at (3, 1) (v) {$v$};

            \node[circle, draw, minimum size = 1.2em, inner sep = 0] at (2, 0.75) (bot) {$\bot$};
            \node[circle, draw, minimum size = 1.2em, inner sep = 0] at (2, 1.25) (top) {$\top$};
            
            \draw[-stealth'] (a) to[out=10, in=170] (b);
            \draw[stealth'-, dashed] (a) to[out=-10, in=-170] (b);
            \draw[-stealth'] (b) to[out=100, in=-100] (u);
            \draw[stealth'-, dashed] (b) to[out=80, in=-80] (u);
            \draw[-stealth'] (c) to[out=145, in=-55] (u);
            \draw[stealth'-, dashed] (c) to[out=125, in=-35] (u);
            \draw[-stealth'] (d) to[out=100, in=-100] (v);
            \draw[stealth'-, dashed] (d) to[out=80, in=-80] (v);
            
            \draw[-stealth'] (u) to[out=-24, in=176] (bot);
            \draw[-stealth'] (bot) to[out=4, in=-156] (v);
            \draw[-stealth'] (v) to[out=176, in=-24] (top);
            \draw[-stealth'] (top) to[out=-156, in=4] (u);
            
            \draw[stealth'-, dashed] (u) to[out=-4, in=156] (bot);
            \draw[stealth'-, dashed] (bot) to[out=24, in=-176] (v);
            \draw[stealth'-, dashed] (v) to[out=156, in=-4] (top);
            \draw[stealth'-, dashed] (top) to[out=-176, in=24] (u);
        \end{tikzpicture}
        \caption{(Left): Example $\UFA$ instance. (Right): Visualization of partition into $T_1$ (solid lines) and $T_2$ (dashed lines).}
        \label{fig:reverse directions}
    \end{figure}
    Now, $s(x) \neq t(x)$ for all $s, t \in T_1$, because each $s(x) \in V$ only has one parent $s(y) \in V$ in their corresponding trees (the cases containing $\top$ or $\bot$ satisfy this condition by construction of $T_1$, as they appear only once for $x$; and $u, v$ are roots, so do not have parents themselves).
    Thus, $T_1$ trivially satisfies $\dep(x, y)$.
    Since the construction and definition of $T_2$ are symmetric, $s(x) \neq t(x)$ for all $s, t \in T_2$. This yields $T_2 \models \dep(y, x)$.

    ``$\Leftarrow$'': For the other direction, assume the contraposition, i.e., that there is a path between $u$ and $v$ in $G$.
    Let $C = (V_C, E_C)$ be the connected component with $\{u, v\} \subseteq V_C$. 
    We will focus now on the sub-team for $C$ together with the special assignments (shown in \Cref{tab:example teams for undirected edges}) and argue why this sub-team cannot be split to satisfy $\dep(x,y)\lor\dep(y,x)$ (this would imply that the full team cannot be split as well).  
    For that purpose, let $T_C$ be the union $\bigcup_{\{a,b\} \in C} T_{\{a,b\}}$ of the teams created by the edges in $C$.
    Observe that 
    \[
        |T_C| = 2|E_C| = 2(|V_C| - 1) \text{ and } |\rngrestr{T}{x}| = |\rngrestr{T}{y}| =|V_C|,
    \]
    i.e., the size of the team is twice the number of edges in $C$, because each edge adds two assignments by definition. 
    The number of edges in $C$ is then one less than the number of vertices in $C$.
    The sizes of the domains of $x$ and $y$ are the same as the number of nodes in $C$.
    
    Next, let $T_C' = T_C \cup \{T_{\{\top, u\}}, T_{\{\top, v\}}, T_{\{\bot, u\}}, T_{\{\bot, v\}}\}$, then
    \[
        |T_C'| = 2|E_C| + 8 = 2(|V_C| - 1) + 8 \text{ and } |\rngrestr{T}{x}| = |\rngrestr{T}{y}| = |V_C| + 2,
    \]
    because eight new assignments are added to the team and two new elements to the domains.
    Thus, we get
    \[
        |T_C'| = 2(|V_C| - 1) + 8= 2|V_C| + 6 > 2|V_C| + 4 = |\rngrestr{T}{x}| + |\rngrestr{T}{y}|
    \]
    and, by \Cref{lem:team to big}, we have that $T_C' \not\models \dep(x,y) \lor \dep(y,x)$.
    Therefore $T \not\models \dep(x,y) \lor \dep(y,x)$ is true, as $T_C' \subseteq T$ and due to downwards closure.
\end{proof}

\subparagraph*{A restricted case of Krom-satisfiability.}\label{sec:krom sat}
In the sequel, we will show how our results also establish the $\LOGSPACE$-completeness of the satisfiability problem restricted to a particular type of 2CNF formulas, namely, those that are monotone, transitive, and dual-free.
\begin{restatable}{definition}{specialTwoSAT}\label{def:special 2sat}
    Let $\phi$ be a 2CNF formula with variables $\Var(\phi)=\{x_1,\dots,x_n\}$, such that:
    \begin{enumerate}[(1.)]
        \item Every clause is monotone, i.e., it is of the form  $(x_i\lor x_j)$ or of the form $(\bar x_i\lor\bar x_j)$. The first type is called \emph{positive}, the second type \emph{negative}.\label{en:monotonic}
        \item No two ``dual'' clauses $(x_i\lor y_j)$, $(\bar x_i\lor\bar x_j)$ appear in  $\phi$.\label{en:nodual}
        \item Transitivity holds: if $(x_i\lor x_j)$ and $(x_j\lor x_k)$ are clauses of  $\phi$, then so is $(x_i\lor x_k)\in\phi$; furthermore,  if $(\bar x_i\lor \bar x_j)$ and $(\bar x_j\lor \bar x_k)$ are clauses of $\phi$, then so is  $(\bar x_i\lor \bar x_k)\in\phi$.\label{en:transitivity}
    \end{enumerate}
    We call such 2CNF formulas \emph{monotone transitive dual-free} and abbreviate the corresponding satisfiability problem by $\specialTwoSat$.
\end{restatable}

We call a set $\{v_1, \dots, v_k\}$ of variables in $\phi$ a \emph{(variable-)clique}, if for each pair $v_i, v_j, i \neq j$ it is true that either $(v_i \lor v_j)$ or $(\lnot v_i \lor \lnot v_j)$ is a clause in $\phi$.
Note that, by the transitivity (\ref{en:transitivity}.), the variables of a $\specialTwoSat$ instance appear in at most two cliques, once positive and once negative.
Thus we can represent $\phi$, by listing its positive and negative cliques.

\begin{restatable}{lemma}{foEqui}\label{lem:dep(x y) lor dep(y x) equivfo specialTwoSat}
    The problems $\MC(\dep(x, y) \lor \dep(y, x))$ and $\specialTwoSat$ are equivalent via first-order reductions.    
\end{restatable}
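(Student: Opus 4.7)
The plan is to exhibit first-order reductions in both directions, in each case by constructing an FO-definable translation and verifying that its output lies in the target class.

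For $\MC(\dep(x,y)\lor\dep(y,x)) \leqfo \specialTwoSat$, I would apply the generic construction that accompanies \Cref{prop:lor of 2 coherent in NL}, specialised to $\varphi=\dep(x,y)$ and $\psi=\dep(y,x)$. For each unordered pair $\{s_i,s_j\}\subseteq T$ of distinct assignments, the pair violates $\dep(x,y)$ exactly when $s_i(x)=s_j(x)$ and $s_i(y)\neq s_j(y)$, and violates $\dep(y,x)$ exactly when $s_i(y)=s_j(y)$ and $s_i(x)\neq s_j(x)$. These two conditions are mutually exclusive, so the resulting 2CNF formula $\Theta_T$ is monotone and dual-free; transitivity is inherited from the transitivity of the relations ``agree on $x$'' and ``agree on $y$'' on $T$. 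Each clause is defined by a quantifier-free FO formula over the binary relation $\Rel(T)$, hence the translation is an FO reduction, and correctness follows from \Cref{prop:lor of 2 coherent in NL}.

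For the converse $\specialTwoSat \leqfo \MC(\dep(x,y)\lor\dep(y,x))$, given $\phi$ with variable set $V$, I would define for each $v\in V$ a ``positive label'' $p(v)$ to be the minimum variable (under the built-in order that accompanies FO reductions) of the positive clique containing $v$, taking $p(v)=v$ when $v$ lies in no positive clique; the ``negative label'' $n(v)$ is defined dually. The output team is
\[
    T = \bigl\{\,(p(v),\, n(v)) \,\bigm|\, v \in V\,\bigr\}.
\]
By transitivity and dual-freeness, each variable belongs to at most one positive and at most one negative clique, so $p$ and $n$ are well-defined and FO-definable; moreover, for distinct $v,w\in V$, one has $p(v)=p(w)$ iff $v,w$ share a positive clique, and analogously for $n$ and negative cliques. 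Running the forward reduction on $T$ therefore reproduces exactly the clauses of $\phi$ (up to the bijection $v\mapsto s_v$), so \Cref{prop:lor of 2 coherent in NL} yields $T\models\dep(x,y)\lor\dep(y,x)$ iff $\phi$ is satisfiable.

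The main technical hurdle will be ensuring that the second reduction is genuinely first-order, since it implicitly quotients $V$ by the ``same positive (negative) clique'' equivalence. I would resolve this via the standard trick of using the built-in linear order to pick the minimum of each clique as a canonical representative, so that $p$ and $n$ become quantifier-free-definable functions and $T$ becomes an FO-definable binary relation on a suitable tuple expansion of $V$. If the disjointness assumption $\rngrestr{T}{x}\cap\rngrestr{T}{y}=\emptyset$ from \Cref{lem:sat iff directed} is desired, I would further tag the two coordinates by expanding the universe to $V\times\{1,2\}$, but this refinement is not needed for the model-checking problem itself.
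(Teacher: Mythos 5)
Your proposal is correct and follows essentially the same route as the paper: both directions rest on the correspondence between positive/negative cliques of an mtdf-2CNF formula and the agreement classes on $x$ and on $y$ of a two-variable team, with the split $T=T_1\cup T_2$ translating into a truth assignment and vice versa. Your choice of minimum-element representatives for cliques in the backward direction is just a concrete rendering of the paper's clique identifiers (and in fact handles variables lying in no clique more explicitly than the paper does), so the two proofs coincide in substance.
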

\begin{proof}
    We present reductions between $\MC(\dep(x, y) \lor \dep(y, x))$ and $\specialTwoSat$ (for illustrations, see \Cref{fig:fo reduction}).
    Let $\tau_{t} = (T^3)$ be the vocabulary of structures where $T$ encodes a team such that $T(i, j, k)$ is true if and only if $s_i(x_j) = a_k$, where $i, j, k$ are natural numbers.
    Note that this differs from the encoding $\Rel(T)$, used in the proof of \Cref{thm:dep(x y) lor dep(y x) L-hard}, in that it explicitly contains the index of assignments.
    This additional information is necessary for this reduction.
    Then $\MC(\dep(x, y) \lor \dep(y, x)) \colon \STRUC[\tau_t] \to \{0,1\}$ is the Boolean query that is true if and only if $T$ is a valid team and $T \models \dep(x, y) \lor \dep(y, x)$.

    Further let $\tau_{pn} = (P^2, N^2)$ be the vocabulary of structures for $\specialTwoSat$, where $P(v, c)$ encodes positive variables that occur in the same clique $c$ and $N(v, c)$ encodes negative variables.
    Then $\specialTwoSat \colon \STRUC[\tau_{pn}] \to \{0,1\}$ is the Boolean query that is true if and only if $P$ and $N$ encode a mtdf-2CNF formula $\phi$ and there is some assignment $\mathfrak{I}$ such that $\mathfrak{I} \models \phi$.

    The first-order reduction $I_{\textrm{tpn}}: \STRUC[\tau_t] \to \STRUC[\tau_{pn}]$ is as follows:
    \begin{equation*}
        I_{\textrm{tpn}} \coloneqq \lambda_{vc}\langle \texttt{true}, \psi_P, \psi_N \rangle, \qquad
        \psi_P(v, c) \coloneqq T(v, 0, c), \qquad
        \psi_N(v, c) \coloneqq T(v, 1, c),
    \end{equation*}
    where \texttt{true} means the universe is the same.
    We show that $A \in \MC(\dep(x, y) \lor \dep(y, x)) \Leftrightarrow I_{\textrm{tpn}}(A) \in \specialTwoSat$.
    Assume $A$ encodes a team $T$ such that $T = T_1 \cup T_2$, where $T_1 \models \dep(x, y)$ and $T_1 \models \dep(y, x)$.
    Then the assignment $\mathfrak{I}$ with
    \begin{itemize}
        \item $\mathfrak{I}(x_i) = 1$, if $s_i \in T_1$
        \item $\mathfrak{I}(x_i) = 0$, if $s_i \in T_2$
    \end{itemize}
    satisfies the formula encodes by $I_{\textrm{tpn}}(A)$.
    This is because, in a positive clique, at most one variable is false.
    Thus, in all clauses, at least one variable is true.
    The opposite holds for negative cliques.

    Now for the reduction in the other direction.
    Consider the following first-order reduction $I_{\textrm{pnt}}\colon \STRUC[\tau_{pn}] \to \STRUC[\tau_t]$ given by:
    \begin{equation*}
        I_{\textrm{pnt}} \coloneqq\  \lambda_{s x a}\langle \texttt{true}, \psi_T \rangle, \qquad
        \psi_T(s, x, a) \coloneqq\ (x = 0 \to N(s, a)) \land (x = 1 \to P(s, a)).
    \end{equation*}
    Notice that $I_{\textrm{pnt}}$ is simply the inverse of $I_{\textrm{tpn}}$, i.\,e., $I_{\textrm{pnt}} = I^{-1}_{\textrm{tpn}}$.
\end{proof}

\begin{figure}
    \centering
    \begin{minipage}{0.16\textwidth}
    \[
        \begin{array}{ccc}\toprule
           T & x & y \\\midrule
           s_0 & 0 & 0 \\
           s_1 & 0 & 1 \\
           s_2 & 0 & 2 \\
           s_3 & 1 & 1 \\
           s_4 & 2 & 1 \\
           s_5 & 2 & 2 \\\bottomrule
        \end{array}
    \]        
    \end{minipage}
    $\leftrightsquigarrow$
    \begin{minipage}{0.22\textwidth}
    \centering
        $T(0,0,0)$, $T(0,1,0)$, \\
        $T(1,0,0)$, $T(1,1,1)$, \\
        $T(2,0,0)$, $T(2,1,2)$, \\
        $T(3,0,1)$, $T(3,1,1)$, \\
        $T(4,0,2)$, $T(4,1,1)$, \\
        $T(5,0,2)$, $T(5,1,2)$\phantom, \\
    \end{minipage}
    $\leftrightsquigarrow$
    \begin{minipage}{0.18\textwidth}
    \centering
        $P(0,0)$, $N(0,0)$, \\
        $P(1,0)$, $N(1,1)$, \\
        $P(2,0)$, $N(2,2)$, \\
        $P(3,1)$, $N(3,1)$, \\
        $P(4,2)$, $N(4,1)$, \\
        $P(5,2)$, $N(5,2)$\phantom,
    \end{minipage}
    $\leftrightsquigarrow$
    \begin{minipage}{0.28\textwidth}
    \centering
        $(x_0 \lor x_1)$, $(x_1 \lor x_0)$, \\
        $(x_1 \lor x_2)$, $(x_2 \lor x_1)$, \\
        $(x_0 \lor x_2)$, $(x_2 \lor x_0)$, \\
        $(x_4 \lor x_5)$, $(x_5 \lor x_4)$, \\
        $(\lnot x_1 \lor \lnot x_3)$, $(\lnot x_3 \lor \lnot x_1)$, \\
        $(\lnot x_3 \lor \lnot x_4)$, $(\lnot x_4 \lor \lnot x_3)$, \\
        $(\lnot x_1 \lor \lnot x_4)$, $(\lnot x_4 \lor \lnot x_1)$, \\
        $(\lnot x_2 \lor \lnot x_5)$, $(\lnot x_5 \lor \lnot x_2)$\phantom,
    \end{minipage}
    
    \caption{Example reduction for \Cref{lem:dep(x y) lor dep(y x) equivfo specialTwoSat}.
    Team $T$ gets encodes via structure $\langle \{0,1,2,3,4,5\}, T^3 \rangle$ that gets mapped to the structure $\langle \{0,1,2,3,4,5\}, P^2, N^2 \rangle$ which encodes a mtdf-2CNF formula $\phi$; and vice versa.
    Notice how the split $\{s_1, s_3, s_5\} \models \dep(x, y)$ and $\{s_0, s_2, s_4\} \models \dep(y, x)$ corresponds to the assignment $\mathfrak{I}(x_1) = \mathfrak{I}(x_3) = \mathfrak{I}(x_5) = 1$ and $\mathfrak{I}(x_0) = \mathfrak{I}(x_2) = \mathfrak{I}(x_4) = 0$.}
    \label{fig:fo reduction}
\end{figure}

\Cref{lem:dep(x y) lor dep(y x) equivfo specialTwoSat} together with \Cref{thm:dep(x y) lor dep(y x) in L} and \Cref{thm:dep(x y) lor dep(y x) L-hard} yield the following result.
\begin{theorem}
    Satisfiability of monotone transitive dual-free 2CNF formula is $\LOGSPACE$-complete under first-order reductions.
\end{theorem}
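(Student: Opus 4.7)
The plan is to obtain the theorem directly by composing the three preceding results: the first-order equivalence of \Cref{lem:dep(x y) lor dep(y x) equivfo specialTwoSat}, the $\LOGSPACE$ upper bound of \Cref{thm:dep(x y) lor dep(y x) in L}, and the $\LOGSPACE$-hardness under first-order reductions of \Cref{thm:dep(x y) lor dep(y x) L-hard}. No new construction is required; the whole argument is a transport of complexity along first-order reductions, using that $\ACzero\subseteq\LOGSPACE$ and that $\LOGSPACE$ is closed under first-order (and even logspace) reductions.

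For the upper bound, I would argue as follows. Given a $\specialTwoSat$ instance encoded over $\tau_{pn}$, apply the first-order interpretation $I_{\textrm{pnt}}\colon \STRUC[\tau_{pn}]\to\STRUC[\tau_t]$ from \Cref{lem:dep(x y) lor dep(y x) equivfo specialTwoSat} to produce a team instance of $\MC(\dep(x,y)\lor \dep(y,x))$. Since first-order reductions are computable in uniform $\ACzero$, and hence in $\LOGSPACE$, and since \Cref{thm:dep(x y) lor dep(y x) in L} places the resulting model-checking problem in $\LOGSPACE$, the composition stays in $\LOGSPACE$ (by the standard fact $\LOGSPACE^{\LOGSPACE}=\LOGSPACE$ that was already invoked in the proof of \Cref{thm:dep(x y) lor dep(y x) in L}).

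For the matching lower bound, the argument is dual. \Cref{thm:dep(x y) lor dep(y x) L-hard} provides, for any $\LOGSPACE$ problem $L$, a first-order reduction from $L$ to $\MC(\dep(x,y)\lor \dep(y,x))$. Composing this with the first-order reduction $I_{\textrm{tpn}}\colon \STRUC[\tau_t]\to\STRUC[\tau_{pn}]$ from \Cref{lem:dep(x y) lor dep(y x) equivfo specialTwoSat} yields a first-order reduction from $L$ to $\specialTwoSat$, using that first-order reductions are closed under composition (see Immerman~\cite{immerman99}). Hence $\specialTwoSat$ is $\LOGSPACE$-hard under first-order reductions.

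The only subtlety to be careful about is compositionality: one must ensure that the first-order interpretations $I_{\textrm{pnt}}$ and $I_{\textrm{tpn}}$ are genuine first-order reductions in the sense of \cite{immerman99,DBLP:conf/lam/Dahlhaus83}, so that their composition with the reductions produced by \Cref{thm:dep(x y) lor dep(y x) L-hard} is again a first-order reduction, and that the $\LOGSPACE$ algorithm of \Cref{thm:dep(x y) lor dep(y x) in L} accepts inputs in the encoding produced by $I_{\textrm{pnt}}$ (which differs from $\Rel(T)$ by making assignment indices explicit; this is a mild syntactic rearrangement computable in $\ACzero$). Both points have been handled in \Cref{lem:dep(x y) lor dep(y x) equivfo specialTwoSat}, so the theorem follows immediately by combining the three results.
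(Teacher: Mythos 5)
Your proposal is correct and matches the paper exactly: the paper derives this theorem by combining the first-order equivalence of Lemma~\ref{lem:dep(x y) lor dep(y x) equivfo specialTwoSat} with the upper bound of Theorem~\ref{thm:dep(x y) lor dep(y x) in L} and the hardness of Theorem~\ref{thm:dep(x y) lor dep(y x) L-hard}. The extra care you take about closure of first-order reductions under composition is sound and only makes explicit what the paper leaves implicit.
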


\section{Coherence}\label{sec:coherence}
This section is devoted to identifying the coherent disjunctions of two unary dependence atoms and determining their coherence-level. 

\begin{figure}
    \centering
    $
	\begin{array}{c}
	\dep(x, y) \lor \dep(x, z)\\
	\begin{array}{ccccc}
        \toprule
            T & x & y & z \\\midrule
            s_1 & 1 & 1 & 1 \\
            s_2 & 1 & 1 & 2 \\
            s_3 & 1 & 2 & 1 \\
            s_4 & 1 & 2 & 2\\\bottomrule
            \multicolumn{4}{c}{(a)}
        \end{array}	
	\end{array}
	\quad
		\begin{array}{c}
		\dep(x) \lor \dep(y, z)\\
        \begin{array}{cccc}
        \toprule
            T & x & y & z \\\midrule
            s_1 & 1 & 1 & 1 \\
            s_2 & 1 & 1 & 2 \\
            s_3 & 2 & 2 & 1 \\
            s_4 & 2 & 2 & 2 \\\bottomrule
            \multicolumn{4}{c}{(b)}
        \end{array}
        \end{array}
        \quad
        \begin{array}{c}
        \dep(x) \lor \dep(x, y)\\
        \begin{array}{cccc}\toprule
            T & x & y \\\midrule
            s_1 & 1 & 1 \\
            s_2 & 1 & 2 \\
            s_3 & 2 & 1 \\
            s_4 & 2 & 2 \\\bottomrule
            \multicolumn{4}{c}{(c)}
        \end{array}
        \end{array}
        \quad
        \begin{array}{c}
        \dep(y) \lor \dep(x, y)\\
        \begin{array}{cccc}\toprule
            T & x & y \\\midrule
            s_1 & 1 & 1 \\
            s_2 & 2 & 1 \\
            s_3 & 3 & 2 \\
            s_4 & 4 & 2 \\\bottomrule
            \multicolumn{4}{c}{(d)}
        \end{array}
        \end{array}
	$
    \caption{Counterexamples of 3-coherence for formulas used in the proofs of Thms.~\ref{thm:dep(x y) lor dep(x z) 4-coherent} and \ref{thm:dep(x) lor dep(y z) 4-coherent}.}
    \label{fig:counterexample for constancy and dependence atom}
\end{figure}

\begin{theorem}\label{thm:dep(x y) lor dep(x z) 4-coherent}
   The formula $\dep(x, y) \lor \dep(x, z)$ has coherence-level $4$.
\end{theorem}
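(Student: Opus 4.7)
The plan is to separate the two directions: exhibit a team witnessing non-$3$-coherence, then establish $4$-coherence via a reduction to a combinatorial coverability condition followed by a structural lemma on non-coverable relations.

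For non-$3$-coherence, the team $T = \{s_1, s_2, s_3, s_4\}$ of \Cref{fig:counterexample for constancy and dependence atom}(a) is a witness. All four assignments have $x = 1$, so any split $T = T_1 \cup T_2$ with $T_1 \models \dep(x, y)$ forces $T_1 \subseteq \{s_1, s_2\}$ (the rows with $y = 1$) or $T_1 \subseteq \{s_3, s_4\}$ (the rows with $y = 2$); in either case $T_2$ contains two assignments agreeing on $x$ but disagreeing on $z$, violating $\dep(x, z)$. Hence $T \not\models \dep(x, y) \lor \dep(x, z)$, while any $3$-element sub-team is split by placing the pair agreeing on $y$ into $T_1$ and the remaining assignment into $T_2$.

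For $4$-coherence, for each value $a$ occurring in the $x$-column of $T$ define $S_a \coloneqq \{s \in T : s(x) = a\}$ and $R_a \coloneqq \{(s(y), s(z)) : s \in S_a\}$. The first step is the observation that $T \models \dep(x, y) \lor \dep(x, z)$ if and only if each $R_a$ is \emph{coverable}, meaning there exist $y_a^*, z_a^*$ such that every $(y, z) \in R_a$ satisfies $y = y_a^*$ or $z = z_a^*$. The forward direction takes $y_a^*$ to be the common $y$-value of $T_1 \cap S_a$ and $z_a^*$ the common $z$-value of $T_2 \cap S_a$; conversely, given coverability data, one sets $T_1 \coloneqq \{s \in T : s(y) = y^*_{s(x)}\}$ and $T_2 \coloneqq T \setminus T_1$, and the coverability condition precisely ensures $T_2 \models \dep(x,z)$. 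The second step is a structural lemma: every non-coverable relation $R$ contains a non-coverable sub-relation of size at most $4$ in one of three forms -- (A) a \emph{$3$-matching} $\{(y_1,z_1),(y_2,z_2),(y_3,z_3)\}$ with the $y_i$'s pairwise distinct and the $z_i$'s pairwise distinct; (B) a \emph{$y$-rectangle} $\{(y_1,z_1),(y_1,z_2),(y_2,z_3),(y_2,z_4)\}$ with $y_1 \neq y_2$, $z_1 \neq z_2$, $z_3 \neq z_4$; or (C) the symmetric $z$-rectangle. Viewing $R$ as a bipartite graph with $y$- and $z$-vertices, K\"onig's theorem gives: if the maximum matching has size $\geq 3$ we obtain (A); otherwise the minimum vertex cover has size $\leq 2$, and since $R$ is not coverable by one $y$-vertex plus one $z$-vertex, the cover must consist of two $y$-vertices (from which (B) is extracted by picking two distinct $z$-values from each of the two selected rows) or two $z$-vertices (yielding (C) by symmetry). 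Each of (A), (B), (C) is directly verified to be non-coverable.

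Combining these pieces: if every $4$-element sub-team of $T$ satisfies the formula, downward closure extends this to every sub-team of size $\leq 4$. A non-coverable $R_a$ would, by the structural lemma, yield a sub-team of $T$ of size $\leq 4$ whose projection is one of (A)--(C) and hence non-coverable, so that sub-team would not satisfy $\dep(x,y) \lor \dep(x,z)$ -- contradicting the hypothesis. Hence every $R_a$ is coverable and $T \models \dep(x, y) \lor \dep(x, z)$, establishing $4$-coherence. The main technical obstacle is the structural lemma, in particular verifying that when the minimum vertex cover consists of two $y$-vertices $\{y_1,y_2\}$, both $|Z(y_1)| \geq 2$ and $|Z(y_2)| \geq 2$ must hold (else $R$ would already be coverable by $y_1$ together with the unique $z$-value of $y_2$), which is exactly what is needed to extract the $y$-rectangle.
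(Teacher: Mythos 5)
Your proof is correct, but the $4$-coherence direction takes a genuinely different route from the paper's. Both proofs use the same non-$3$-coherence witness (the team of \Cref{fig:counterexample for constancy and dependence atom}(a)) and both begin the upper bound by reducing to a team that is constant on $x$. From there the paper works directly on assignments: it posits a $4$-element configuration $\{s,u,v,t\}$ constrained by inequalities on $y$ and $z$, verifies that this configuration falsifies the disjunction, and then argues by contradiction that such a configuration must exist in any falsifying team (if any of $t$, $u$, $v$ is missing, an explicit split of $T$ is exhibited). You instead abstract the problem per $x$-class into a ``cross cover'' condition on the bipartite incidence between $y$-values and $z$-values, and then invoke K\"onig's theorem to classify the minimal non-coverable configurations into three explicit types of size at most $4$ (a $3$-matching and the two rectangle types). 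Your route buys an exhaustive and independently checkable list of minimal obstructions --- each of (A)--(C) is verified non-coverable in one line, and the case analysis (maximum matching $\geq 3$ versus minimum vertex cover of size $2$ on one side) is airtight, including the key observation that both selected $y$-vertices must have at least two distinct $z$-neighbours. The paper's route is shorter on the page but rests on a more delicate existence argument for the configuration $\{s,u,v,t\}$; your coverability reformulation also makes transparent why the coherence level is exactly $4$ rather than $3$ (the rectangle obstructions genuinely need four points, whereas the matching obstruction needs only three). Both arguments are sound; yours is the more systematic of the two.
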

\begin{proof}
    We first show that $\dep(x, y) \lor \dep(x, z)$ is not 3-coherent.
    Consider the team $T$ in \Cref{fig:counterexample for constancy and dependence atom} ($a$).
    This team does not satisfy $\dep(x, y) \lor \dep(x, z)$.
    To see this, consider the two maximal subsets, $T_1 = \{s_1, s_2\}$ and $T_2 = \{s_3, s_4\}$, that satisfy $\dep(x, y)$.
    We cannot split $T = T_1 \cup T_2$, because $T_1 \not\models \dep(x, z)$ and $T_2 \not\models \dep(x, z)$.

    Every 3-element subset of $T$ satisfies $\dep(x, y) \lor \dep(x, z)$, because either $T_1$ or $T_2$ is now a singleton, thus trivially satisfying the dependence atom $\dep(x, z)$. 
    Therefore $\dep(x, y) \lor \dep(x, z)$ is not 3-coherent.
    
    Next, we show that if $T$ does not satisfies $\dep(x, y) \lor \dep(x, z)$, then there is a sub-team $T' \subseteq T$ of size $|T'| \leq 4$ that does not satisfies $\dep(x, y) \lor \dep(x, z)$ either.
    Let $T$ be a team with $T \not\models \dep(x, y) \lor \dep(x, z)$.
    If $|T| \leq 4$, then $T' = T$.
    Otherwise assume without loss of generality that all assignments in $T$ agree on $x$.
    Two assignments with different values for $x$ trivially satisfy both atoms.
    Thus, we can make this assumption by splitting $T$ into sub-teams that agree on $x$, and then consider each sub-team separately.
    
    Now, let $T' = \{s, u, v, t\}\subseteq T$, such that the following constrains hold:
    \begin{itemize}
        \item 
        $s(y) \neq t(y)$ and $s(z) \neq t(z)$; 
        \item 
        $u(y) \neq s(y)$ and $u(z) \neq t(z)$; 
        \item 
        $v(z) \neq s(z)$ and $v(y) \neq t(y)$.
    \end{itemize}
    We first show that $T' \not\models \dep(x, y) \lor \dep(x, z)$ is true and then that such a $T' \subseteq T$ always exists (see \Cref{fig:team stuv}).
    By the constrains on $y$, we have that $\{s, v\}$ and $\{u, t\}$ are the two maximal sub-teams that satisfy $\dep(x, y)$.
    Therefore at least one of the two sub-teams must also satisfy $\dep(x, z)$ for the disjunction to be true.
    But $\{s, v\} \not\models \dep(x, z)$, because $s(x) = v(x)$ and $s(z) \neq v(z)$; analogously $\{u, t\} \not\models \dep(x, z)$.
    Thus $T' \not\models \dep(x, y) \lor \dep(x, z)$ is true.

    Let $s \in T$ and assume no $t \in T$ meets the constrains above.
    Then $T$ can be split via $T_1 = \{w \in T \mid w(y) = s(y)\} \models \dep(x, y)$ and $T_2 = \{w \in T \mid w(z) = s(z)\} \models \dep(x, z)$.
    Next, assume there are $s, t$ and $u$, but no $v$.
    Then we split $T$ via $T_1 = \{w \in T \mid w(y) = t(y)\} \models \dep(x, y)$ and $T_2 = \{w \in T \mid w(z) = s(z)\} \models \dep(x, z)$.
    The last case is analogous.
    This is a contradiction to $T \not\models \dep(x, y) \lor \dep(x, z)$, hence such a sub-team must always exists.
    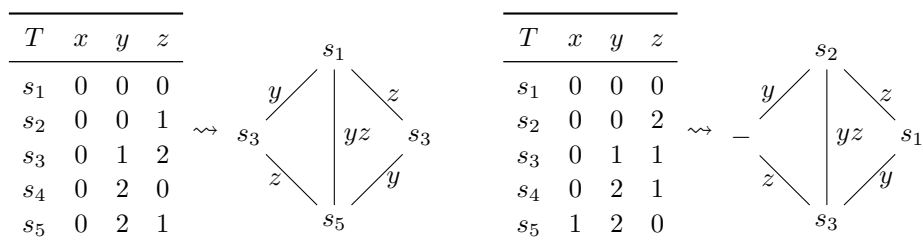
\begin{figure}
        \centering
        $
        \begin{array}{cccc}\toprule
            T & x & y & z \\\midrule
            s_1 & 0 & 0 & 0 \\
            s_2 & 0 & 0 & 1 \\
            s_3 & 0 & 1 & 2 \\
            s_4 & 0 & 2 & 0 \\
            s_5 & 0 & 2 & 1 \\\bottomrule
        \end{array}
        $
        $\leadsto$
        \begin{tikzpicture}[scale=2.25, baseline=(current bounding box.center)]
            \node (s) at (0,0) {$s_1$};
            \node (t) at (0,-1) {$s_5$};
            \node (u) at (-0.5,-0.5) {$s_3$};
            \node (v) at ( 0.5,-0.5) {$s_3$};
    
            \draw (s) -- node[right] {$yz$} (t);
            \draw (s) -- node[left] {$y$} (u);
            \draw (s) -- node[right] {$z$} (v);
            \draw (u) -- node[left] {$z$} (t);
            \draw (v) -- node[right] {$y$} (t);
        \end{tikzpicture}
        \qquad
        $
        \begin{array}{cccc}\toprule
            T & x & y & z \\\midrule
            s_1 & 0 & 0 & 0 \\
            s_2 & 0 & 0 & 2 \\
            s_3 & 0 & 1 & 1 \\
            s_4 & 0 & 2 & 1 \\
            s_5 & 1 & 2 & 0 \\\bottomrule
        \end{array}
        $
        $\leadsto$
        \begin{tikzpicture}[scale=2.25, baseline=(current bounding box.center)]
            \node (s) at (0,0) {$s_2$};
            \node (t) at (0,-1) {$s_3$};
            \node (u) at (-0.5,-0.5) {$-$};
            \node (v) at ( 0.5,-0.5) {$s_1$};
    
            \draw (s) -- node[right] {$yz$} (t);
            \draw (s) -- node[left] {$y$} (u);
            \draw (s) -- node[right] {$z$} (v);
            \draw (u) -- node[left] {$z$} (t);
            \draw (v) -- node[right] {$y$} (t);
        \end{tikzpicture}
        \caption{Two examples for the construction in \Cref{thm:dep(x y) lor dep(x z) 4-coherent}.
        Edge label denote the variable values where the assignments from the vertices differ, e.g., edge $\{s_1,s_3\}$ with label $y$ means that $s_1(y)\neq s_3(y)$.
        On the left we have that $s_1$ corresponds to $s$, $s_3$ to $u$ and $v$, and $s_5$ to $t$ from the construction.
        }
        \label{fig:team stuv}
    \end{figure}
\end{proof}
As a special case of the previous lemma, we deduce the following corollary.
\begin{corollary}\label{cor:dep(y) lor dep(z) is 4-coherent}
    The formula $\dep(y) \lor \dep(z)$ has coherence-level $4$.
\end{corollary}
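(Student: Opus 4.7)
The plan is to obtain the corollary directly from Theorem~\ref{thm:dep(x y) lor dep(x z) 4-coherent} by observing that, on any team in which a variable $x$ takes a single value, the atom $\dep(x,y)$ becomes equivalent to the constancy atom $\dep(y)$, and similarly for $\dep(x,z)$. This turns $\dep(y)\lor \dep(z)$ into the $x$-constant special case of $\dep(x,y)\lor \dep(x,z)$, so both the non-$3$-coherence witness and the $4$-coherence argument from the theorem should carry over after a minor adjustment.

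For the lower bound, I would exhibit the team $T_0$ with domain $\{y,z\}$ consisting of the four assignments $(1,1)$, $(1,2)$, $(2,1)$, $(2,2)$, which is the $(y,z)$-projection of the team in Figure~\ref{fig:counterexample for constancy and dependence atom}(a). A short case analysis on the two possible $y$-classes shows $T_0 \not\models \dep(y)\lor \dep(z)$: any $T_1$ satisfying $\dep(y)$ is contained in a single $y$-class of size two, forcing the two assignments of the opposite $y$-class into $T_2$, where their differing $z$-values violate $\dep(z)$. On the other hand, removing any single assignment reduces one of the $y$-classes to a singleton, after which the surviving two-element block forms a valid $T_1\models\dep(y)$ and the singleton trivially satisfies $\dep(z)$; hence every $3$-element sub-team of $T_0$ satisfies $\dep(y)\lor \dep(z)$, so $\dep(y)\lor\dep(z)$ is not $3$-coherent.

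For the upper bound I plan to pad teams with a constant extra column. Given a team $T$ with $\{y,z\}\subseteq \Dom(T)$, I would form $T'$ with $\Dom(T')=\Dom(T)\cup\{x\}$ (for a fresh variable $x$) by setting $x$ to one fixed value on every assignment of $T$. Because $x$ is constant throughout $T'$, any sub-team $U\subseteq T'$ satisfies $\dep(x,y)$ if and only if $U$ satisfies $\dep(y)$, and analogously for $z$; hence $T\models \dep(y)\lor\dep(z)$ if and only if $T'\models \dep(x,y)\lor\dep(x,z)$. If $T\not\models\dep(y)\lor\dep(z)$, Theorem~\ref{thm:dep(x y) lor dep(x z) 4-coherent} then provides a sub-team $S'\subseteq T'$ with $|S'|\le 4$ and $S'\not\models \dep(x,y)\lor\dep(x,z)$, and deleting the padded column yields the required sub-team $S\subseteq T$ of size at most $4$ with $S\not\models \dep(y)\lor\dep(z)$.

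I do not foresee a genuine obstacle here; the only point that really needs care is the equivalence ``$U\models \dep(x,y) \Leftrightarrow U\models \dep(y)$ whenever $x$ is constant on $U$'', which follows immediately from the semantic clause for dependence atoms. The four symmetric sub-cases of the lower-bound argument (one for each assignment removed from $T_0$) also need to be checked, but each reduces to the same two-line splitting argument.
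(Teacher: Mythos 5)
Your proposal is correct and matches the paper's route: the paper derives this corollary simply as ``a special case of the previous lemma,'' i.e., by viewing $\dep(y)\lor\dep(z)$ as $\dep(x,y)\lor\dep(x,z)$ restricted to teams constant on $x$, which is exactly your constant-column padding argument together with the projected counterexample to $3$-coherence.
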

\begin{restatable}{theorem}{coherenceOfMix}\label{thm:dep(x) lor dep(y z) 4-coherent}
    Each of the following formulas has coherence-level $4$:
    
    \noindent \itemstyle{(1.)} $\dep(x) \lor \dep(y, z)$;\quad
    \itemstyle{(2.)} $\dep(x) \lor \dep(x, y)$;\quad
    \itemstyle{(3.)} $\dep(y) \lor \dep(x, y)$.
\end{restatable}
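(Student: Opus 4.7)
The plan is to handle each item separately, in the style of the proof of Theorem~\ref{thm:dep(x y) lor dep(x z) 4-coherent}: first rule out $3$-coherence by checking that the $4$-element team exhibited in the appropriate panel of Figure~\ref{fig:counterexample for constancy and dependence atom} fails $\varphi$ while every one of its $3$-element sub-teams satisfies $\varphi$; then establish $4$-coherence by showing that every team $T\not\models\varphi$ contains a sub-team of size at most~$4$ that also fails $\varphi$. The non-$3$-coherence parts are direct finite case analyses over panels $(b)$, $(c)$, $(d)$; all the work is in the $4$-coherence direction.

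For item~(2.), I would first prove the structural characterization: $T\models\dep(x)\lor\dep(x,y)$ if and only if at most one $x$-class $\{s\in T:s(x)=a\}$ contains assignments with two different $y$-values. Indeed, all $y$-mismatches of one distinguished $x$-class $a$ can be absorbed into $T_1$ (which satisfies $\dep(x)$ since it has constant $x=a$), while the remaining $x$-classes go into $T_2$ and satisfy $\dep(x,y)$ precisely when each of them has a single $y$-value. If $T\not\models\varphi$, then two distinct $x$-values $a_1,a_2$ each host a pair of assignments with differing $y$-values, and the four chosen assignments form a $4$-element sub-team failing $\varphi$, as required. Item~(3.) is handled analogously with the roles of $x$ and $y$ exchanged.

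The bulk of the work is item~(1.). The analogous characterization to prove is: $T\models\dep(x)\lor\dep(y,z)$ if and only if there is some $a\in\rngrestr{T}{x}\cup\{\ast\}$ (with $\ast$ signalling the option $T_1=\emptyset$) such that $\{s\in T:s(x)\neq a\}\models\dep(y,z)$. Negating this shows that a failing $T$ has, for every $a$, a $(y,z)$-\emph{witness}, meaning a pair $s_1,s_2\in T$ with $s_1(y)=s_2(y)$ and $s_1(z)\neq s_2(z)$ whose $x$-coordinates both differ from $a$. I would then case-split on whether some $x$-class of $T$ hosts an \emph{in-class} witness (both elements sharing an $x$-value). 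If two distinct $x$-values host in-class witnesses, take one pair from each class to obtain a $4$-element sub-team resembling Figure~\ref{fig:counterexample for constancy and dependence atom}(b) and failing $\varphi$; if exactly one $x$-value $a_0$ does, combine its in-class witness with a cross-class witness avoiding $a_0$, the existence of which is forced by the failure condition, to obtain the desired $4$-sub-team.

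The main obstacle lies in the purely cross-class case of item~(1.): here one must pick two pair-witnesses that together involve enough distinct $x$-values so that removing any single $x$-value from the chosen four still leaves a surviving $(y,z)$-witness among those four assignments. The careful selection of these witnesses, coordinated with the failure condition applied at several values of $a$ in turn, is the technical heart of the argument.
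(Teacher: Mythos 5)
Your non-$3$-coherence checks and your argument for item \itemstyle{(2.)} are sound; the characterization ``$T\models\dep(x)\lor\dep(x,y)$ iff at most one $x$-class carries two distinct $y$-values'' is correct and immediately produces a $4$-element failing sub-team from any failing team (the paper instead derives \itemstyle{(2.)} and \itemstyle{(3.)} from \itemstyle{(1.)}). Two things break, however. First, item \itemstyle{(3.)} is not item \itemstyle{(2.)} with the roles of $x$ and $y$ exchanged: that swap turns $\dep(x)\lor\dep(x,y)$ into $\dep(y)\lor\dep(y,x)$, whereas in \itemstyle{(3.)} the constancy atom sits on the \emph{determined} variable of $\dep(x,y)$; this formula behaves like item \itemstyle{(1.)}, not like item \itemstyle{(2.)}.

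Second, and more seriously, the ``purely cross-class'' step of item \itemstyle{(1.)}, which you leave open as the technical heart, cannot be carried out. Consider the team
\[
T=\bigl\{(1,a,0),\ (2,a,1),\ (2,b,0),\ (3,b,1),\ (1,c,0),\ (3,c,1)\bigr\}
\]
with coordinates $(x,y,z)$. Its only $\dep(y,z)$-violating pairs are the three cross-class pairs with $y$-values $a$, $b$, $c$, whose $x$-value sets $\{1,2\}$, $\{2,3\}$, $\{1,3\}$ form a triangle. No single $x$-class meets all three pairs, so $T\not\models\dep(x)\lor\dep(y,z)$. Yet any two of the three pairs share an $x$-class, so every sub-team of size at most $5$ contains at most two complete violating pairs and is split by moving into the $\dep(x)$-part an $x$-class that meets every surviving pair. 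Hence no failing sub-team of size $4$ (or $5$) exists: your plan of choosing two witness pairs with sufficiently disjoint $x$-values is impossible here, and the formula is in fact not $4$-coherent (not even $5$-coherent); an analogous triangle of two-element $y$-value sets defeats item \itemstyle{(3.)} as well. For comparison, the paper's own Case 2 asserts that once $s_1\in T_1$ and $s_2\in T_2$ are fixed there is a single assignment placeable in neither part; in the team above every individual assignment can be placed and only a further \emph{pair} is jointly unplaceable, so the published argument stalls at exactly the point you flagged. The gap you identified is therefore not a missing technicality but a genuine obstruction to the claimed coherence level.
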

\begin{proof}
    We first demonstrate that $\dep(x) \lor \dep(y, z)$, $\dep(x) \lor \dep(x, y)$ and $\dep(y) \lor \dep(x, y)$ are not 3-coherent.
    
\itemstyle{(1.)}
    Consider the team $T$ in \Cref{fig:counterexample for constancy and dependence atom} ($b$).
    We have that $T \not\models \dep(x) \lor \dep(y, z)$ which is straightforward to see; consider for example the splits $\{s_1,s_2\} \models \dep(x)$ or $\{s_3,s_4\} \models \dep(x)$ and notice that in both cases $\dep(y,z)$ is falsified by $\{s_3,s_4$\} and $\{s_1,s_2\}$ respectively.
    
    Now, all 3-element sub-teams have a split $T' = T_1 \cup T_2$, where $T_1$ is of size two and $T_1 \models \dep(x)$ and $T_2$ is of size one and therefore satisfies any dependence atom, in particular $T_2 \models \dep(y, z)$.

\itemstyle{(2.)}
    Not $3$-coherent, because of the team in \Cref{fig:counterexample for constancy and dependence atom} ($c$).

\itemstyle{(3.)}
    Not $3$-coherent, because of the team in \Cref{fig:counterexample for constancy and dependence atom} ($d$). 

    For coherence it suffices to only consider $\dep(x) \lor \dep(y, z)$.
    The other two cases directly follow from the first.

    We show that if $T \not\models \dep(x) \lor \dep(y, z)$, then there must be a sub-team $T' \subseteq T$ with $|T'| \leq 4$ and $T' \not\models \dep(x) \lor \dep(y, z)$.
    Start with a pair of assignments $s_1, s_2 \in T$ for which $\{s_1,s_2\} \not\models \dep(y,z)$ holds, i.\,e., $s_1(y) = s_2(y)$ and $s_2(z) \neq s_2(z)$.
    Such a pair has to exist in $T$ to falsify the disjunction.
    Now, there are two cases for the assignment of $x$.
    \begin{description}
        \item[Case 1:] 
        The assignments $s_1$ and $s_2$ have the same value for $x$, i.\,e., $s_1(x) = s_2(x)$.
        Then for $T$ to falsify the disjunction, there must be a second pair $\{s_3, s_4\} \not\models \dep(y, z)$ with $s_3(x) \neq s_1(x)$ and $s_4(x) \neq s_1(x)$.
        but then $T' = \{s_1, s_2, s_3, s_4\}$ and $T' \not\models \dep(x) \lor \dep(y, z)$.
        To see this, assume there is a split $T_1 \cup T_2 = \{s_1, s_2, s_3, s_4\}$ with $T_1 \models \dep(x)$ and $T_2 \models \dep(y,z)$.
        If $s_1 \in T_1$, then $s_3 \in T_2$, but $s_4$ cannot be in $T_1$ or $T_2$.
        If $s_1 \in T_2$, then $s_2 \in T_1$, therefore $s_3$ and $s_4$ must be in $T_2$ which is not possible.
        Therefore no split exists.
    \item[Case 2:]
        The assignments $s_1$ and $s_2$ have different values for $x$, i.\,.e, $s_1(x) \neq s_2(x)$.
        Then the pair satisfies neither $\dep(x)$ nor $\dep(y,z)$, so in any potential split $s_1 \in T_1$ implies $s_2 \in T_2$ and vice versa.
        Since the original team falsified the disjunction, there must be an assignment $s_3 \in T$ that can neither be in $T_1$ nor $T_2$ if $s_1 \in T_1$ and $s_2 \in T_2$.
        In the same way there must be an assignment $s_4 \in T$ that is not in $T_1$ or $T_2$ for $s_1 \in T_2$ and $s_2 \in T_1$.
        Therefore $T' = \{s_1,s_2,s_3,s_4\}$ also has no split satisfying $\dep(x) \lor \dep(y,z)$. \qedhere
    \end{description}
\end{proof}

Finally, we turn towards the three incoherent cases. 
For these formulas, there exists no $k\in\mathbb N$, such that the respective formula is $k$-coherent. 
Notice that the incoherence result can here be deduced from the fact that $\FO\subsetneq\LOGSPACE$. 
Nevertheless, we provide independent proofs for these results in the appendix.

\begin{restatable}{theorem}{incoherentTheorem}\label{thm:dep(x z) lor dep(y z) incoherent}\label{thm:dep(x y) lor dep(y x) incoherent}\label{cor:dep(x y) lor dep(y z) incoherent}
The following three formulas are incoherent:

\noindent \itemstyle{(1.)} $\dep(x, y) \lor \dep(y, x)$;\quad
\itemstyle{(2.)} $\dep(x, y) \lor \dep(y, z)$;\quad
\itemstyle{(3.)} $\dep(x, z) \lor \dep(y, z)$.
\end{restatable}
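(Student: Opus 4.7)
My plan is to prove incoherence by constructing, for every $k\in\mathbb N$, an explicit team $T_k$ with $T_k\not\models\varphi$ while every $k$-element sub-team $S\subseteq T_k$ satisfies $\varphi$. The common engine is that each of the three formulas corresponds to a \emph{pseudo-forest} condition on an underlying bipartite graph---directly via \Cref{lem:sat iff directed} for (1), and via a three-variable encoding for (2) and (3)---combined with the fact that the cyclomatic number of a graph is non-increasing under edge deletion and strictly decreases whenever a non-bridge is removed.

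For (1) $\dep(x,y)\lor\dep(y,x)$, I would take $T_k$ to be the team whose graph $G_{T_k}$ is the even cycle $C_{2\ell}$, for some $\ell\geq\max\{3,k\}$, plus one chord joining two opposite-parity vertices at cycle-distance at least three (e.g., positions $1$ and $4$, so the graph stays bipartite, no multi-edge appears, and the ranges of $x$ and $y$ remain disjoint so that \Cref{lem:sat iff directed} applies). This graph is connected with $|E|=|V|+1$, so \Cref{lem:sat iff directed} yields $T_k\not\models\varphi$. For any $k$-element sub-team $S$ at least one edge of $G_{T_k}$ is deleted; since every edge of $C_{2\ell}+\text{chord}$ lies on a cycle, the first deletion drops the cyclomatic number from $2$ to $1$, and subsequent deletions keep it at most $1$, so each component $C$ of $G_S$ satisfies $|E_C|\leq|V_C|$ and hence $S\models\varphi$ by the same lemma.

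For (2) and (3), I would lift the construction to three variables. Given a bipartite graph $G=(A\uplus B, E)$, define $T_G$ by setting $s_e(x)=u$, $s_e(y)=v$, and $s_e(z)=\mathrm{label}(e)$ for each edge $e=\{u,v\}$ with $u\in A$ and $v\in B$, all labels pairwise distinct. A short argument using the uniqueness of the $z$-labels then shows that $T_G\models\dep(x,z)\lor\dep(y,z)$, and symmetrically $T_G\models\dep(x,y)\lor\dep(y,z)$, if and only if $G$ admits an orientation in which every vertex has out-degree at most one; this is well known to be equivalent to $G$ being a pseudo-forest. Plugging in $G=C_{2\ell}+\text{chord}$ from (1) then produces $T_k$, and the same cyclomatic argument yields incoherence.

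The main obstacle I anticipate is the pseudo-forest orientation equivalence for (2) and (3): one has to match a witness split $T_1\cup T_2$ of $T_G$ with a vertex-orientation of $G$ of out-degree at most one, in both directions. The crucial ingredient is the uniqueness of the $z$-labels, which forces any two assignments in $T_2$ sharing a $y$-value (and any two in $T_1$ sharing an $x$-value) to coincide as edges of $G$; this pins down the correspondence and lets the cyclomatic argument from (1) carry over verbatim.
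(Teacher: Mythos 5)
Your proposal is correct and uses essentially the same witness as the paper: the team $T_n$ in the paper's proof of part \itemstyle{(1.)} is precisely an even cycle plus a chord at cycle-distance three, and your three-variable gadgets for \itemstyle{(2.)} and \itemstyle{(3.)} (unique $z$-labels per edge) play the same role as the paper's extensions of $T_n$ by a third coordinate. The only real difference is in the verification that every proper sub-team satisfies the disjunction: where the paper exhibits an explicit re-splitting $T_1',T_2'$ depending on which assignment was removed, you route everything through \Cref{lem:sat iff directed} and the fact that subgraphs of a unicyclic graph are pseudoforests, which is a cleaner way to discharge that step.
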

\begin{proof}
\itemstyle{(1.)}
    We show that for all even $n \in \mathbb{N}$, there is a team $T$ of size $|T| = n+1$ with $T \not\models \dep(x, y) \lor \dep(y, x)$, but for all $n$-element sub-teams $T' \models \dep(x, y) \lor \dep(y, x)$ holds.

    \begin{figure}
        \centering
        $
        \begin{array}{ccc}\toprule
            T_n & x & y \\ \midrule
            s_0 & 1 & \frac{n}{2} \\ 
            s_1 & 1 & 1 \\ 
            s_2 & 1 & 2 \\ 
            s_3 & 2 & 2 \\ 
            s_4 & 2 & 3 \\ 
            \vdots & \vdots & \vdots \\
            s_{n-1} & \frac{n}{2} & \frac{n}{2} \\ 
            s_{n\phantom{+0}} & \frac{n}{2} & 1 \\\bottomrule         
        \end{array}
        \qquad
        \begin{array}{ccc}\toprule
            T' & x & y \\ \midrule
            s_0 & 1 & \frac{n}{2} \\ 
            s_3 & 2 & 2 \\ 
            \vdots & \vdots & \vdots \\
            s_{n\phantom{+0}} & \frac{n}{2} & 1 \\\hdashline
            s_1 & 1 & 1 \\ 
            s_2 & 1 & 2 \\ 
            \vdots & \vdots & \vdots \\
            s_{n-1} & \frac{n}{2} & \frac{n}{2} \\\bottomrule  
        \end{array}
        \qquad\qquad
        \begin{array}{cccc}\toprule
            T_n & x & y & z \\ \midrule
            s_0 & 1 & \frac{n}{2} & 3 \\ 
            s_1 & 1 & 1 & 1 \\ 
            s_2 & 1 & 2 & 2 \\ 
            s_3 & 2 & 2 & 1 \\ 
            s_4 & 2 & 3 & 2 \\ 
            \vdots & \vdots & \vdots & \vdots \\
            s_{n-1} & \frac{n}{2} & \frac{n}{2} & 1 \\ 
            s_{n\phantom{+0}} & \frac{n}{2} & 1 & 2 \\\bottomrule          
        \end{array}
        \qquad
        \begin{array}{cccc}\toprule
            T' & x & y & z \\ \midrule
            s_0 & 1 & \frac{n}{2} & 3 \\ 
            s_3 & 2 & 2 & 1 \\ 
            \vdots & \vdots & \vdots & \vdots \\
            s_{n\phantom{+0}} & \frac{n}{2} & 1 & 2 \\\hdashline
            s_1 & 1 & 1 & 1 \\ 
            s_2 & 1 & 2 & 2 \\ 
            \vdots & \vdots & \vdots & \vdots \\
            s_{n-1} & \frac{n}{2} & \frac{n}{2} & 1 \\ \bottomrule
        \end{array}
        $\\$\,$
        \caption{Team $T_n$ in the proof of \Cref{thm:dep(x y) lor dep(y x) incoherent}, and constructed split of an example sub-team $T' = T_n \setminus\{s_{4}\}$.
        On the left for $\dep(x, y) \lor \dep(y, x)$ and on the right for $\dep(x, z) \lor \dep(y, z)$.}
        \label{tab:incoherent}        
    \end{figure}

    Let the team $T_n$ be as depicted in \Cref{tab:incoherent}.
    We show that it is impossible to split $T_n = T_1 \cup T_2$ such that $T_1 \models \dep(x, y)$ and $T_2 \models \dep(y, x)$.
    Start with assignment $s_1$ and choose $s_1 \in T_1$.
    Next, $s_2 \in T_2$ is the only choice, because $T_1 = \{s_1, s_2\} \not\models \dep(x, y)$. 
    Then $s_3 \in T_1$, because $T_2 = \{s_2, s_3\} \not\models \dep(y, x)$, but $T_1 = \{s_1, s_3\} \models \dep(x, y)$.
    Continue this procedure until all assignments but $s_0$ are either in $T_1$ or $T_2$.
    Now, $s_0$ cannot be in $T_1$, because $\{s_0, s_1\} \not\models \dep(x, y)$ and it cannot be in $T_2$, because $\{s_0, s_{n-2}\} \not\models \dep(y, x)$.
    If instead we chose $s_1 \in T_2$ and repeated the process backwards, that is continued with $s_n$, then $\{s_0, s_2\} \not \models \dep(x, y)$ and $\{s_0, s_{n-1}\} \not \models \dep(y, x)$.
    Since $s_1$ has to be in either $T_1$ or $T_2$, we can conclude that no split exists and $T_n \not\models \dep(x, y) \lor \dep(y, x)$.

    Next, we show that all sub-teams of size $n$ satisfy the disjunction.
    The sub-team $T_n \setminus \{s_0\} \models \dep(x, y) \lor \dep(y, x)$ given the split described above.
    Otherwise, let $T' = T_n \setminus \{s_i\}$ and consider the split above where $s_1 \in T_1$.
    We define a new split for $T'$ as follows
    \begin{align*}
        T'_1 &= \{s_j \in T_1 \mid 1 < j < i\} \cup \{s_j \in T_2 \mid j > i\} \cup \{s_0\}\\
        T'_2 &= \{s_j \in T_2 \mid 1 < j < i\} \cup \{s_j \in T_1 \mid j > i\} \cup \{s_1\}   
    \end{align*}
    Now, $T'_1 \models \dep(x, y)$ and $T'_2 \models \dep(y, x)$.

    We will present the case that $s_i \in T_1$; the case $s_i \in T_2$ follows analogously.
    If $s_i \in T_1$, then $s_{i+1}$ has to be in $T_2$ as shown above.
    Now, since $s_i \not\in T'$, it is possible that $s_{i+1} \in T'_1$.
    From this $s_{i+2} \in T'_2, s_{i+3} \in T'_1, \dots$ follows immediately.
    This chain leads to $s_n \in T'_1$, which allows $s_1 \in T'_2$, which in turn makes $s_0 \in T'_1$ possible.
    Therefore we have that $T'_1 \models \dep(x, y)$ and $T'_2 \models \dep(y, x)$.

    We have shown that for every even $n \in \mathbb{N}$ we can construct a team such that $\dep(x, y) \lor \dep(y, x)$ is not $n$-coherent, that is $\dep(x, y) \lor \dep(y, x)$ is incoherent.

\itemstyle{(2.)}
    Extend the team $T_n$ in \itemstyle{(1.)} with variable $z$, such that $s(z) = s(x)$ for all $s \in T_n$.
    Then incoherence of $\dep(x,y) \lor \dep(y, x)$ follows immediately from the incoherence of $\dep(x,y) \lor \dep(y,x)$.

\itemstyle{(3.)}
    For $\dep(x, z) \lor \dep(y, z)$ extend $T_n$ with variable $z$ such that if $s(x) = s'(x)$ or $s(y) = s'(y)$, then $s(z) \neq s'(z)$ for all $s, s' \in T$.
    For example see team $T_n$ in \Cref{tab:incoherent}.
    It is easy to check, that the arguments of \itemstyle{(1.)} also hold true in such a team for $\dep(x, z) \lor \dep(y, z)$.
\end{proof}
By combining the results from \Cref{sec:mc} and \ref{sec:coherence}, we obtain the following classification for the complexity of the model-checking of disjunctions of two dependence or constancy atoms.
\begin{theorem}
    The following statements are true for the model-checking problem of disjunctions of two unary dependence or constancy atoms:
    \begin{enumerate}[(1.)]
        \item If $\varphi$ is one of the formulas $\dep(x,y)\lor\dep(z,u)$, $\dep(x,z)\lor\dep(y,z)$, $\dep(x,y)\lor\dep(y,z)$, then the problem $\MC(\varphi)$ is $\NL$-complete.
        \item The problem $\MC(\dep(x,y)\lor\dep(y,x))$ is $\LOGSPACE$-complete.
        \item In all other cases, the problem $\MC(\varphi)$ is in $\FO$, hence  also in uniform $\ACzero$.
    \end{enumerate}
\end{theorem}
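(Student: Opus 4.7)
The plan is to reduce the statement to a finite case analysis by enumerating, up to renaming of variables, every disjunction $\varphi_1 \lor \varphi_2$ in which each of $\varphi_1,\varphi_2$ is either a constancy atom $\dep(x)$ or a unary dependence atom $\dep(x,y)$, and then to match each enumerated formula with a result already proven in the paper. The classification is purely syntactic once one observes that renaming of variables preserves the model-checking complexity (indeed, the resulting problems are isomorphic via a trivial relabelling of columns of the input team). Hence we may assume a canonical form in each case.

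First I would split the enumeration according to the arities of the two atoms: \emph{(i)} two constancy atoms, \emph{(ii)} a constancy atom and a unary dependence atom, \emph{(iii)} two unary dependence atoms. For \emph{(i)}, up to renaming the only case is $\dep(x)\lor\dep(y)$ (the degenerate $\dep(x)\lor\dep(x)$ being trivially in $\FO$), and \Cref{cor:dep(y) lor dep(z) is 4-coherent} gives $4$-coherence, hence $\FO$-definability via \Cref{coherent-FO-reduction}. For \emph{(ii)}, up to renaming the cases are $\dep(x)\lor\dep(y,z)$, $\dep(x)\lor\dep(x,y)$, and $\dep(y)\lor\dep(x,y)$, all handled by \Cref{thm:dep(x) lor dep(y z) 4-coherent}, yielding $\FO$-definability. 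For \emph{(iii)}, the disjunctions $\dep(x,y)\lor\dep(z,u)$ differ structurally depending on which variables are shared; enumerating the possibilities gives the six rows of \Cref{tab:current results}: the three $\NL$-complete cases of item \itemstyle{(1.)} (handled by \Cref{prop:dep(x y) lor dep(z w) NL-hard} and \Cref{thm:dep(x z) lor dep(y z) NL-complete}), the $\LOGSPACE$-complete case $\dep(x,y)\lor\dep(y,x)$ of item \itemstyle{(2.)} (handled by \Cref{thm:dep(x y) lor dep(y x) in L} and \Cref{thm:dep(x y) lor dep(y x) L-hard}), and the two coherent cases $\dep(x,y)\lor\dep(x,y)$ and $\dep(x,y)\lor\dep(x,z)$ falling under item \itemstyle{(3.)} by \Cref{prop:bigvee dep(x y) k-coherent} and \Cref{thm:dep(x y) lor dep(x z) 4-coherent}, combined with \Cref{coherent-FO-reduction}.

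To glue the pieces together, I would observe that an $\NL$-complete or $\LOGSPACE$-complete problem is in particular not in $\FO$ (assuming the standard separations $\FO\subsetneq\LOGSPACE\subseteq\NL$, which are the backdrop for the trichotomy), so the three cases are mutually exclusive. The $\NL$ upper bound for \emph{every} disjunction of two unary atoms is guaranteed uniformly by \Cref{prop:lor of 2 coherent in NL}, since constancy and unary dependence atoms are $2$-coherent, which justifies the absence of any harder case.

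The only mildly delicate step is the exhaustiveness of the case enumeration for \emph{(iii)}: one has to argue that, given two unary dependence atoms $\dep(a,b)$ and $\dep(c,d)$ with $\{a,b\}\cap\{c,d\}$ being of size $0$, $1$, or $2$, and in the size-$1$ and size-$2$ cases tracking which coordinate of each atom is shared, the possibilities collapse to exactly the rows of \Cref{tab:current results}. I expect this bookkeeping to be the only real obstacle; once carried out, the three conclusions follow directly by quoting the corresponding earlier results without further argument.
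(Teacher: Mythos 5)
Your proposal is correct and takes essentially the same route as the paper, which presents this theorem as a direct combination of the results from Sections~\ref{sec:mc} and~\ref{sec:coherence} (i.e., the rows of Table~\ref{tab:current results} together with \Cref{coherent-FO-reduction}); your explicit enumeration up to variable renaming and symmetry of $\lor$ is exactly the implicit bookkeeping the paper relies on, and it is exhaustive.
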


\section{Disjunctions of higher-arity dependence atoms}
In this section, we consider disjunctions of two dependence atoms whose arity  may be higher than one.  We leverage the preceding results about disjunctions of two unary dependence atoms to classify the complexity of disjunctions of two
dependence atoms of higher arity.

Before stating and proving the main result of this section, we present two lemmas.
\begin{lemma}\label{lem:simulate higher arity}
    Consider a $\DL$-formula of the form $\dep(x_1, \dots, x_n, y)\lor \dep(u_1, \dots, u_m, v)$. For every team
    $T = \{s_1, \dots, s_k\}$ with  domain $\{x_1, \dots, x_n, y, u_1, \dots, u_m, v\}$ and range $A$, there is  a team $T'$ with domain $\{x',y, u_1,\dots,u_m,v\}$ and range $A^n \cup A$, such that  
    \[
        T \models \dep(x_1, \dots, x_n, y)\lor \dep(u_1, \dots, u_m, v) \quad \mbox{iff} \quad  T' \models \dep(x', y)\lor \dep(u_1, \dots, u_m, v).
    \]
\end{lemma}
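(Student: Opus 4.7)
The plan is to encode the tuple of variables $(x_1,\dots,x_n)$ as a single variable $x'$ whose value records the full tuple, so that agreement on $x'$ becomes equivalent to simultaneous agreement on $x_1,\dots,x_n$. This directly translates the $n$-ary dependence atom into a unary one while leaving the second disjunct untouched.

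First, I would define the team $T'$ explicitly. For each $s_i \in T$, let $s_i' \colon \{x',y,u_1,\dots,u_m,v\} \to A^n \cup A$ be the assignment with $s_i'(x') \coloneqq (s_i(x_1), \dots, s_i(x_n)) \in A^n$ and $s_i'(w) \coloneqq s_i(w)$ for every $w \in \{y, u_1,\dots,u_m, v\}$. Then set $T' \coloneqq \{s_1',\dots,s_k'\}$. The crucial observation, which I would state at the outset, is:
\begin{equation*}
s_i'(x') = s_j'(x') \quad \Longleftrightarrow \quad s_i(x_\ell) = s_j(x_\ell) \text{ for all } 1 \le \ell \le n.
\end{equation*}
Moreover, $s_i'$ and $s_j'$ agree on $y$ (resp.\ on $u_1,\dots,u_m,v$) if and only if $s_i$ and $s_j$ do, since these variables are copied verbatim.

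Next, I would prove both directions of the equivalence by transferring partitions. For the forward direction, suppose $T = T_1 \cup T_2$ with $T_1 \models \dep(x_1,\dots,x_n,y)$ and $T_2 \models \dep(u_1,\dots,u_m,v)$. Define $T_i' \coloneqq \{s' \mid s \in T_i\}$ for $i=1,2$; clearly $T' = T_1' \cup T_2'$. If $s',t' \in T_1'$ satisfy $s'(x') = t'(x')$, then by the boxed equivalence $s$ and $t$ agree on all of $x_1,\dots,x_n$, hence $s(y)=t(y)$, and so $s'(y)=t'(y)$, giving $T_1' \models \dep(x',y)$. The atom $\dep(u_1,\dots,u_m,v)$ transfers to $T_2'$ immediately since those variables are preserved. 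The backward direction is entirely analogous: given a split $T' = T_1' \cup T_2'$ witnessing $T' \models \dep(x',y) \lor \dep(u_1,\dots,u_m,v)$, pull it back to a split of $T$ via the bijection $s_i \leftrightarrow s_i'$, and use the same equivalence to verify the dependence atoms.

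I expect no real obstacle; the only subtlety is being explicit that the range of $T'$ must include the tuples from $A^n$ as well as the elements of $A$ inherited from $y, u_1,\dots,u_m, v$ (hence the stated range $A^n \cup A$), and that the assignment $s_i'$ is well-defined because $x'$ is a fresh variable distinct from $y, u_1,\dots,u_m, v$. A brief remark that this construction is computable in logarithmic space would be useful later, though it is not strictly required by the statement of the lemma.
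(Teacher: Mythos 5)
Your proposal is correct and follows essentially the same route as the paper: both define $T'$ by packing $(s(x_1),\dots,s(x_n))$ into a single fresh variable $x'$ while copying the remaining variables verbatim, and both transfer the split $T=T_1\cup T_2$ to $T'=T_1'\cup T_2'$ using the observation that agreement on $x'$ is equivalent to simultaneous agreement on $x_1,\dots,x_n$. The paper phrases the key step as a chain of equivalences for $T_1\models\dep(x_1,\dots,x_n,y)$ iff $T_1'\models\dep(x',y)$ and likewise dispatches the converse as ``a similar argument,'' so there is no substantive difference.
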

\begin{proof} Consider the assignments $s'_i$, $1\leq i\leq k$, with domain $\{x',y,u_1,\ldots,u_m,y\}$ and range $A^n\cup A$ defined as follows: $s'_i(x') = (s(x_1),\ldots,s(x_n))$ and $s'_i(w)=s_i(w)$, if $w$ is one of the variables $y,u_1,\ldots,u_m,v$.
Furthermore, consider the team $T'= \{s'_1, \dots, s'_k\}$ be the team, which has  domain $\{x',y,u_1,\dots,u_m,v\}$ and range $A^n\cup A$.
    
Assume first that $T\models\dep(x_1, \dots, x_n, y)\lor \dep(u_1, \dots, u_m, v)$ via $T=T_1\cup T_2$, that is, $T_1\models\dep(x_1,\dots,x_n,y)$ and $T_2\models \dep(u_1,\dots,u_m,v)$. 
Let $T_1'= \{s_i'\in T'\mid s_i\in T_1\} $ and $T_2'= \{s_i'\in T'\mid s_i\in T_2\}$. Then the following  hold:
\begin{align*}
        T_1 \models \dep(x_1, \dots, x_n, y) 
        & ~  \mbox{iff} ~  \forall s, t \in T_1\colon s(x_1) = t(x_1), \dots, s(x_n) = t(x_n) \Rightarrow s(y) = t(y) \\
        & ~  \mbox{iff} ~ \forall s, t \in T_1\colon (s(x_1), \dots, s(x_n)) = (t(x_1), \dots, t(x_n)) \Rightarrow s(y) {=} t(y) \\
        & ~ \mbox{iff} ~ \forall s', t' \in T_1'\colon s'(x') = t'(x') \Rightarrow s'(y) = t'(y) ~ \mbox{iff} ~ T_1' \models \dep(x', y).
    \end{align*}
Furthermore,  $T_2'\models \dep(u_1,\ldots,u_m,v)$, since  $T_2\upharpoonright\{u_1,\dots,u_m,v\}=T'_2\upharpoonright\{u_1,\dots,u_m,v\}$. Thus, if $T \models \dep(x_1, \dots, x_n, y)\lor \dep(u_1, \dots, u_m, v)$, then   $T' \models \dep(x', y)\lor \dep(u_1, \dots, u_m, v)$.
The converse is established using a similar argument.
\end{proof}

By applying Lemma \ref{lem:simulate higher arity} twice, we obtain the following result.

\begin{corollary} \label{cor:simulate higher arity}
Consider a $\DL$-formula of the form $\dep(x_1, \dots, x_n, y)\lor \dep(u_1, \dots, u_m, v)$. For every team
    $T = \{s_1, \dots, s_k\}$ with  domain $\{x_1, \dots, x_n, y, u_1, \dots, u_m, v\}$ and range $A$, there is  a team $T'$ with domain $\{x',y, u',v\}$ and range $A^n \cup A^m \cup A$, such that  
    \[
        T \models \dep(x_1, \dots, x_n, y)\lor \dep(u_1, \dots, u_m, v) \quad \mbox{iff} \quad  T' \models \dep(x', y)\lor \dep(u', v).
    \]
\end{corollary}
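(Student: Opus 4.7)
The plan is to derive this corollary as a direct double application of \Cref{lem:simulate higher arity}. Since the lemma compresses only the first disjunct (the dependence atom $\dep(x_1,\dots,x_n,y)$) while leaving the second disjunct untouched, I will apply it once on each side, using the commutativity of $\lor$ in $\DL$ to swap the disjuncts between the two applications.

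Concretely, I would proceed as follows. First, I apply \Cref{lem:simulate higher arity} to the given team $T$ and the formula $\dep(x_1,\dots,x_n,y)\lor\dep(u_1,\dots,u_m,v)$. This produces an intermediate team $T''$ with domain $\{x',y,u_1,\dots,u_m,v\}$ and range $A^n\cup A$, defined by replacing the tuple $(s(x_1),\dots,s(x_n))$ in every assignment $s\in T$ by a single value $s''(x')=(s(x_1),\dots,s(x_n))$ while preserving the values on $y,u_1,\dots,u_m,v$. The lemma gives
\[
T\models \dep(x_1,\dots,x_n,y)\lor\dep(u_1,\dots,u_m,v)\quad\text{iff}\quad T''\models \dep(x',y)\lor\dep(u_1,\dots,u_m,v).
\]

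Next, I would observe that disjunction in $\DL$ is commutative: the clause $(\calA,T)\models\varphi_1\lor\varphi_2$ iff there exist teams with $T=T_1\cup T_2$ and $(\calA,T_i)\models\varphi_i$ is symmetric in $\varphi_1,\varphi_2$. So $T''\models\dep(x',y)\lor\dep(u_1,\dots,u_m,v)$ iff $T''\models\dep(u_1,\dots,u_m,v)\lor\dep(x',y)$. Now I apply \Cref{lem:simulate higher arity} a second time, this time to $T''$ and the rewritten formula, introducing a fresh variable $u'$ whose value on each assignment encodes the tuple $(s''(u_1),\dots,s''(u_m))$. This yields a team $T'$ with domain $\{x',y,u',v\}$ and range $A^n\cup A^m\cup A$, together with the equivalence
\[
T''\models\dep(u_1,\dots,u_m,v)\lor\dep(x',y)\quad\text{iff}\quad T'\models\dep(u',v)\lor\dep(x',y).
\]
Applying commutativity once more and chaining the two equivalences completes the argument.

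There is no real obstacle here. The only point worth being explicit about in the final write-up is that the second application of \Cref{lem:simulate higher arity} is legitimate: the lemma is stated for the first dependence atom of the disjunction, but commutativity of $\lor$ in dependence logic (immediate from \Cref{def-semantics}) lets us apply it to either side. If desired, one could alternatively state a mildly more symmetric version of the lemma and avoid the swap, but the double application plus commutativity is the cleanest route and keeps the proof to a couple of lines.
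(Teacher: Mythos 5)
Your proposal is correct and matches the paper exactly: the paper derives \Cref{cor:simulate higher arity} precisely by applying \Cref{lem:simulate higher arity} twice (once per disjunct), and your explicit use of the commutativity of $\lor$ to justify the second application is the right, if tacit in the paper, bookkeeping step. No gaps.
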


The next lemma establishes the coherence of  one of the simplest higher-arity disjunctions.%
\begin{lemma} \label{lem:coherence-4}
    The formula $\dep(x, z) \lor \dep(x, y, u)$ has coherence-level 4.
\end{lemma}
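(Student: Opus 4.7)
The plan is to prove two things: (i) $\dep(x,z)\lor\dep(x,y,u)$ is not $3$-coherent, via an explicit $4$-element counterexample, and (ii) it is $4$-coherent, by reducing the question to the $4$-coherence of the $3$-variable formula $\dep(z)\lor\dep(y,u)$, which, up to a renaming of variables, is a special case of \Cref{thm:dep(x) lor dep(y z) 4-coherent}.

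\emph{Non-$3$-coherence.} I would exhibit the team $T=\{s_1,s_2,s_3,s_4\}$ of domain $\{x,y,z,u\}$ in which every $s_i$ has $x=y=1$ and $(z,u)$ runs through all four combinations $(1,1),(2,2),(1,2),(2,1)$ in the four rows. Because $x$ is constant on $T$, any $T_1\models\dep(x,z)$ must have constant $z$, so $T_1\subseteq\{s_1,s_3\}$ or $T_1\subseteq\{s_2,s_4\}$; similarly, since $(x,y)$ is constant, any $T_2\models\dep(x,y,u)$ must have constant $u$, so $T_2\subseteq\{s_1,s_4\}$ or $T_2\subseteq\{s_2,s_3\}$. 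A four-way case check shows that in each of the four combinations of such $T_1$ and $T_2$, exactly one $s_i$ is omitted from $T_1\cup T_2$, so $T\not\models\varphi$. On the other hand, removing any single $s_i$ leaves a three-element sub-team that splits trivially, since two of the remaining rows share a $z$-value (yielding a valid $T_1$) and the third sits alone in $T_2$.

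\emph{$4$-coherence.} I would argue the contrapositive: given $T\not\models\varphi$, produce a bad sub-team of size at most $4$. The key observation is that both dependence atoms decompose along $x$: tuples with different $x$-values never jointly constrain either atom. Formally, for each $a\in\rngrestr{T}{x}$ set $T^a=\{s\in T\mid s(x)=a\}$. Then $T\models\dep(x,z)\lor\dep(x,y,u)$ if and only if $T^a\models\dep(z)\lor\dep(y,u)$ for every $a$: the forward direction follows by restricting any witnessing split $T=T_1\cup T_2$ to $T^a$; the backward direction follows by taking, for each $a$, a witnessing split $T^a=T^a_1\cup T^a_2$ and letting $T_i=\bigcup_a T^a_i$, which works because two tuples in $T_1$ with the same $x$-value come from the same $T^a_1$ (hence share $z$), and similarly two tuples in $T_2$ with the same $(x,y)$-value come from the same $T^a_2$ (hence share $u$). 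Therefore $T\not\models\varphi$ yields some $a$ with $T^a\not\models\dep(z)\lor\dep(y,u)$, and by \Cref{thm:dep(x) lor dep(y z) 4-coherent} (applied with variables $x,y,z$ renamed to $z,y,u$) there is $S\subseteq T^a$ with $|S|\le 4$ and $S\not\models\dep(z)\lor\dep(y,u)$. Since $S\subseteq T^a$ still has constant $x$, the decomposition equivalence gives $S\not\models\varphi$ as well.

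\emph{Main obstacle.} Essentially all of the substance is in justifying the ``decomposition along $x$'' equivalence; once that is in place, the non-$3$-coherence direction is a short enumeration and the $4$-coherence direction is a direct invocation of an earlier theorem. The only point requiring care is checking that a family of local witnesses $\{T^a_1,T^a_2\}_a$ really does glue into a single global witness, which it does precisely because $\dep(x,z)$ and $\dep(x,y,u)$ impose no constraints across distinct $x$-values.
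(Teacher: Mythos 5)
Your proposal is correct and follows essentially the same route as the paper: an explicit four-row team with $x,y$ constant witnessing non-$3$-coherence (the paper obtains the same team by supplementing its earlier counterexample with a constant $y$), and for $4$-coherence the reduction to a single $x$-class where the formula collapses to $\dep(z)\lor\dep(y,u)$, handled by \Cref{thm:dep(x) lor dep(y z) 4-coherent}. Your write-up merely makes the glueing of per-$x$-class witnesses explicit, which the paper leaves as a brief remark.
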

\begin{proof}
    To show that   $\dep(x, z) \lor \dep(x, y, u)$ is not 3-coherent, notice that this formula
    is equivalent to $\dep(x, z) \lor \dep(x, u)$ on  every team $T$ that is constant on $y$, i.e., $s(y) = t(y)$ holds, for all $s, t \in T$.
    Thus, the supplement team $T_f^{y}$ given by $T$ in \Cref{fig:counterexample for constancy and dependence atom} (a) (and with suitable variable renamings) and constant function $f(s) = a$ is a counterexample to the 3-coherence of $\dep(x, z) \lor \dep(x, y, u)$.

    To show that   $\dep(x, z) \lor \dep(x, y, u)$ is 4-coherent, consider an arbitrary team $T$ with domain $\{x,z,y,u\}$. 
    Without loss of generality, we may  assume that $T$ is constant on $x$.
    This is so because $T$ can be split into sub-teams $T_{a_i} = \{s \in T \mid s(x) = a_i \in A\}$ that are constant on $x$. 
    If $T$ falsifies $\dep(x, z) \lor \dep(x, y, u)$, then at least one of these sub-teams also falsifies $\dep(x, z) \lor \dep(x, y, u)$.
    Indeed, if every sub-team $T_{a_i}$ satisfied $\dep(x, z) \lor \dep(x, y, u)$, then their union $T$ would also satisfy $\dep(x, z) \lor \dep(x, y, u)$, since every 
    team $T'$ consisting of  two assignments  $s \in T_{a_j}$ and   $t \in T_{a_k}$ with $a_j\not =  a_k$ satisfies both dependence atoms (because $s(x) = a_j \neq a_k = t(x)$).
    
    Now, since $T$ is constant on $x$, whether or not 
    $T\models  \dep(x, z)$  depends only on the values of $z$ and so  $\dep(x, z)$ acts like the constancy atom $\dep(z)$.
    Similarly, whether or not $T\models \dep(x, y, u)$,  depends only on the values of $y$ and $u$, and so $\dep(x,y,u)$ acts like  the unary dependence atom  into $\dep(y,u)$. 
    Therefore, we have that $T \models \dep(x, z) \lor \dep(x, y, u)$ if and only if $T \models \dep(z) \lor \dep(y, u)$.
    As the latter disjunction was shown to be 4-coherent in \Cref{thm:dep(x) lor dep(y z) 4-coherent}, we have that $\dep(x, z) \lor \dep(x, y, u)$ is  4-coherent as well.
\end{proof}
We are now ready to state and prove a classification theorem for the complexity of the model checking problem for disjunctions of dependence atoms of higher arities. Note that, when a dependence atom $\dep(y_1,\ldots,y_n,u)$ of higher arity is considered, then, by  the semantics of dependence logic $\DL$, the order  $y_1,\ldots,y_n$ of the variables is immaterial.

\begin{theorem} \label{thm:higher-classification}
    Let $\dep(x_1,\ldots,x_m,z)$ and $\dep(y_1,\ldots, y_n,u)$ be two dependence atoms such that $n\geq m$ and $n>1$ (i.e., at least one of them is non-unary), and some of the variables of the first atom may coincide with some of the variables of the second.
    Then the following  hold:
    \begin{enumerate}[(1.)]
        \item If neither the set $\{x_1,\ldots,x_m\}$ is contained in the set $\{y_1,\ldots, y_n\}$ nor the set $\{y_1,\ldots,y_n\}$ is contained in the set $\{x_1,\ldots,x_m\}$, then the problem
        \[
        \MC(\dep(x_1,\ldots,x_m,z) \lor \dep(y_1,\ldots, y_n,u))
        \]
        is $\NL$-complete.
            \item Otherwise (i.e., if $\{x_1,\ldots,x_m\}\subseteq \{y_1,\ldots,y_n\}$), the problem
        \[
        \MC(\dep(x_1,\ldots,x_m,z) \lor \dep(y_1,\ldots, y_n,u))
        \]
        is in $\FO$, hence also in uniform $\ACzero$.
    \end{enumerate}
\end{theorem}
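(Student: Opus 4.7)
The plan is to leverage the simulation lemmas (Lemma \ref{lem:simulate higher arity}, Corollary \ref{cor:simulate higher arity}), suitably extended to the case where the two dependence atoms share variables, in order to reduce the higher-arity problem either to known unary cases or to the canonical form of Lemma \ref{lem:coherence-4}. The overall structure mirrors the case analysis already used in the unary classification.

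For Case (2.) (FO-definability when $X \subseteq Y$, which under the hypothesis $n \geq m$ is the only alternative to Case (1.)): write $W = Y \setminus X$ and replace the tuple of variables in $X$ by a fresh variable $x'$ and, if $W \neq \emptyset$, the tuple of variables in $W$ by a fresh variable $w'$. When $W = \emptyset$ (so $X = Y$), the resulting formula is $\dep(x', z) \lor \dep(x', u)$, which is 4-coherent by \Cref{thm:dep(x y) lor dep(x z) 4-coherent} (or 3-coherent by \Cref{prop:bigvee dep(x y) k-coherent} if $z = u$). When $W \neq \emptyset$ and $z \neq u$, the resulting formula $\dep(x', z) \lor \dep(x', w', u)$ is exactly the form of \Cref{lem:coherence-4}, hence 4-coherent. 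If instead $z = u$, I would adapt the argument of \Cref{lem:coherence-4}: by splitting a non-$x'$-constant team into sub-teams that are constant on $x'$ (across which both atoms are trivially satisfied since $x'$-disagreement makes $\dep(x', \cdot)$ vacuous), the problem reduces to the case of an $x'$-constant team, on which the formula collapses to $\dep(z) \lor \dep(w', z)$, 4-coherent by \Cref{thm:dep(x) lor dep(y z) 4-coherent}. In every sub-case, coherence yields FO-definability by \Cref{coherent-FO-reduction}.

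For Case (1.) ($\NL$-completeness when $X$ and $Y$ are incomparable): membership follows immediately from \Cref{prop:lor of 2 coherent in NL}, since dependence atoms are $2$-coherent. For hardness, pick $x_0 \in X \setminus Y$ and $y_0 \in Y \setminus X$ and reduce from an appropriate NL-hard problem by fixing every variable in $(X \cup Y) \setminus \{x_0, y_0, z, u\}$ to a constant. On such teams, $\dep(X, z)$ collapses to $\dep(X^\ast, z)$ with $X^\ast = \{x_0\} \cup (X \cap \{u\})$, and $\dep(Y, u)$ collapses to $\dep(Y^\ast, u)$ with $Y^\ast = \{y_0\} \cup (Y \cap \{z\})$. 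When $u \notin X$ and $z \notin Y$, the reduced formula is a disjunction of two \emph{unary} dependence atoms with incomparable determining sets; accounting for the possible coincidences among $x_0, y_0, z, u$, it is $\NL$-complete by one of \Cref{prop:dep(x y) lor dep(z w) NL-hard}, \Cref{thm:dep(x z) lor dep(y z) NL-complete}, or \Cref{thm:dep(x y) lor dep(y z) NL-complete}. When $u \in X$ or $z \in Y$, the reduced formula has arity at most $2+2 = 4$; as long as the original satisfies $|X|+|Y| > 4$, this is a strict decrease and an induction on $|X|+|Y|$ applies. The base cases $|X|+|Y| \in \{3,4\}$ are handled directly: for $|X|+|Y| = 3$, a further simulation of the binary atom by a single fresh variable reduces the formula to a three-variable unary disjunction whose shape matches one of the known NL-complete unary forms.

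The principal obstacle is the remaining base case $|X| = |Y| = 2$ with both $u \in X$ and $z \in Y$, i.e., the formula $\dep(x_0, u, z) \lor \dep(y_0, z, u)$, in which $u$ and $z$ are entangled (each is simultaneously a determining variable of one atom and the target of the other). The constant-fixing reduction does not strictly decrease the arity here, and the simulation yields a unary disjunction $\dep(x', z) \lor \dep(y', u)$ on a restricted team (the first component of $x'$ coincides with $u$, and the first component of $y'$ coincides with $z$). For this case, I would give a direct hardness reduction by adapting Kontinen's 2SAT construction from \Cref{prop:dep(x y) lor dep(z w) NL-hard} to monotone 2SAT (which is NL-complete), choosing the encoding of propositions and clauses so that the positions of literals within clauses respect the entanglement constraint and so that a consistent truth assignment is still enforced on the $T_1$ side; the $T_2$ side then naturally implements the ``at most one per clause'' condition via the $(z,b)$-determines-$u$ atom.
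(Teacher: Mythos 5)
Your part \itemstyle{(2.)} follows the paper's own route: collapse the shared determining variables via \Cref{cor:simulate higher arity}, land on \Cref{thm:dep(x y) lor dep(x z) 4-coherent} or on \Cref{lem:coherence-4}, and apply \Cref{coherent-FO-reduction}; your explicit treatment of the sub-case $z=u$ (reducing to $\dep(z)\lor\dep(w',z)$ and invoking \Cref{thm:dep(x) lor dep(y z) 4-coherent}) is a refinement the paper glosses over, and that half is fine. Part \itemstyle{(1.)} is where you diverge. The paper proves hardness by a single padding reduction from the already $\NL$-complete unary problem $\MC(\dep(x,z)\lor\dep(y,z))$, and it only carries this out for the special case $z=u$ --- precisely the situation in which neither determined variable can occur among the other atom's determining variables, so the ``entanglement'' you identify never arises there. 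Your constant-fixing scheme handles the non-entangled configurations correctly (and there the only possible coincidence is $z=u$, so the reduced unary form is always one of the $\NL$-complete shapes), but the entangled configurations are a genuine gap, and not only because your final construction is left as a sketch.

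Concretely, take $X\setminus Y=\{u\}$ and $Y\setminus X=\{z\}$, e.g.\ $\varphi=\dep(u,a,z)\lor\dep(z,a,u)$ with $X=\{u,a\}$ and $Y=\{z,a\}$. The incomparability hypothesis of \itemstyle{(1.)} is met, but your constant-fixing forces $x_0=u$ and $y_0=z$ and collapses $\varphi$ to $\dep(u,z)\lor\dep(z,u)$, which is only $\LOGSPACE$-complete, so the reduction yields no $\NL$-hardness. In fact no hardness proof can succeed here unless $\LOGSPACE=\NL$: any two assignments that disagree on $a$ vacuously satisfy \emph{both} atoms, so a split of $T$ exists if and only if every $a$-class of $T$ separately satisfies $\dep(u,z)\lor\dep(z,u)$, and this condition is decidable in $\LOGSPACE$ by iterating the procedure of \Cref{thm:dep(x y) lor dep(y x) in L} over the classes. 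Thus at least one of the base cases you propose to ``handle directly'' actually falsifies the $\NL$-completeness claim, and your planned adaptation of Kontinen's 2SAT construction cannot go through uniformly for the entangled shapes. Any complete argument must first isolate the configurations in which every pair violating either atom is forced to agree on all of $X\cap Y$ (these decompose into classes as above and drop out of the $\NL$-complete tier) and prove hardness only for what remains; as written, neither your proposal nor the literal statement of case \itemstyle{(1.)} survives this example.
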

\begin{proof} For the first part, 
    membership in $\NL$ follows from \Cref{prop:lor of 2 coherent in NL}.
    We show $\NL$-hardness for the special case $z = u$, by exhibiting a logarithmic-space reduction from  the problem $\MC(\dep(x, z) \lor \dep(y, z))$ (see~\Cref{thm:dep(x z) lor dep(y z) NL-complete}).
    Let $\hat x$ be a variable in  $\{x_1,\ldots,x_m\}$ but not in  $\{y_1,\ldots, y_n\}$; similarly, let $\hat y$ be a variable in  $\{y_1,\ldots,y_n\}$ but not in $\{x_1,\ldots, x_m\}$.
    Given a team $T$, we construct in logarithmic space a team  $T'$ as follows: for every $s\in T$, we put in $T'$ an assignment $s'$, such that 
    $s'(z) = s(z)$, $s'(x_i) = s(x_i)$ if $x_i = \hat x$ and $a$ otherwise, likewise $s'(y_i) = s(y_i)$ if $y_i = \hat y$ and $a$ otherwise, where $a$ is some fixed element in the range of $T$.
    Then the following equivalences hold:
    \begin{align*}
        &T \models \dep(x, z) \lor \dep(y, z)\\
        &\mbox{iff} \quad  \exists T_1, T_2 \colon T = T_1 \cup T_2, T_1 \models \dep(x, z) \text{ and } T_2 \models \dep(y, z) \\
        & \mbox{iff} \quad \exists T_1, T_2 \colon T = T_1 \cup T_2, \forall s_1, t_1 \in T_1 \colon s_1(x) = t_1(x) \Rightarrow s_1(z) = t_1(z) \text{ and } \\
        &\quad\forall s_2, t_2 \in T_2 \colon s_2(y) = t_2(y) \Rightarrow s_2(z) = t_2(z) \\
        & \mbox{iff} \quad \exists T'_1, T'_2 \colon T' = T'_1 \cup T'_2,
        \forall s'_1, t'_1 \in T'_1 \colon s'_1(\hat x) = t'_1(\hat x) \Rightarrow s'_1(z) = t'_1(z) \text{ and }\\
        &\quad\forall s'_2, t'_2 \in T'_2 \colon s'_2(\hat y) = t'_2(\hat y)\Rightarrow s'_2(z) = t'_2(z) \\
        &\mbox{iff} \quad \exists T'_1, T'_2 \colon T' = T'_1 \cup T'_2,  T'_1 \models \dep(x_1, \dots, x_m, z) \text{ and } T'_2 \models \dep(y_1, \dots, y_n, z) \tag{$\star$}\\
        & \mbox{iff} \quad T' \models \dep(x_1, \dots, x_m, z) \lor \dep(y_1, \dots, y_n, z).
    \end{align*}
    Equivalence $(\star)$ holds, because if $x_i\neq\hat x$, then $s'(x_i) = t'(x_i) = a$ (and similarly for the $y_i$'s).
    
    For the second part, we distinguish two cases, namely whether $m=n$ or $m<n$.

\noindent \emph{Case} ``$m=n$'': The formula $\dep(x_1,\dots,x_m, z) \lor \dep(x_1,\dots, x_m, u)$ has coherence-level 4, because, via Corollary \ref{cor:simulate higher arity} it is equivalent to $\dep(x, y) \lor \dep(x, z)$. 
        Thus, by \Cref{{coherent-FO-reduction}} and \Cref{thm:dep(x y) lor dep(x z) 4-coherent}, we have that $\MC(\dep(x_1,\dots,x_m,z) \lor \dep(y_1,\dots, y_n,u))$ is in $\FO$.
    
\noindent  \emph{Case}  ``$m<n$'': Assume $y_{m+1}, \dots, y_n$ are the variables that do not occur in the first dependence atom $\dep(x_1, \dots, x_m, z)$.
        Let $x = (x_1, \dots, x_m)$ and $y = (y_{m+1}, \dots, y_n)$.
        Then, by Corollary~\ref{cor:simulate higher arity}, the disjunction $\dep(x_1,\ldots,x_m,z) \lor \dep(x_1,\ldots,x_m, y_{m+1},\dots, y_n,u)$ is equivalent to the disjunction $\dep(x, z) \lor \dep(x, y, u)$, which is 4-coherent by \Cref{lem:coherence-4}.
        Thus by \Cref{{coherent-FO-reduction}}, we have that $\MC(\dep(x_1,\dots,x_m,z) \lor \dep(y_1,\dots, y_n,u))$ is in $\FO$.
\end{proof}

Observe that, 
whenever $\MC(\varphi)$ was shown to be  in $\FO$,  the proof of Theorem 
\ref{thm:higher-classification}  actually showed that $\varphi$ is coherent.
Combined with the earlier results about unary dependence atoms, we obtain the following characterization of coherence of disjunctions of dependence atoms.

\begin{corollary}
    Let $\varphi$ be a disjunction of two dependence atoms of arbitrary arities. Then $\varphi$ is coherent if and only if $\MC(\varphi)$ is first-order definable.
\end{corollary}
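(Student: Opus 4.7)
The plan is to prove the two implications separately and lean on the complete complexity classification already assembled in the preceding sections. The forward implication (coherence $\Rightarrow$ FO-definability) is immediate: if $\varphi$ is $k$-coherent for some $k\in\mathbb N$, then \Cref{coherent-FO-reduction} produces a first-order sentence $\varphi^{*}$ over the $\Rel$-encoding such that $(\mathcal M,T)\models\varphi$ iff $\Rel(T)\models\varphi^{*}$, so $\MC(\varphi)\in\FO$ (and hence in uniform $\ACzero$). No case analysis is needed here.

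For the converse I would argue by contrapositive: if $\varphi$ is incoherent then $\MC(\varphi)\notin\FO$. The key observation is that the preceding theorems together give an exhaustive trichotomy on the complexity of $\MC(\varphi)$ for every disjunction of two dependence atoms. For unary atoms this is the theorem just before the corollary, and for higher-arity atoms this is \Cref{thm:higher-classification}. In every case $\MC(\varphi)$ falls into exactly one of $\FO$, $\LOGSPACE$-complete, or $\NL$-complete. Since $\FO\subsetneq\LOGSPACE\subseteq\NL$ (the strict inclusion being the unconditional separation of $\ACzero$ from $\LOGSPACE$, or alternatively follows from Reingold's theorem and the strictness already noted in the paper), whenever $\MC(\varphi)$ is $\LOGSPACE$- or $\NL$-complete, it cannot be $\FO$-definable. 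Contrapositively, if $\MC(\varphi)\in\FO$, then $\MC(\varphi)$ lies in the $\FO$ bucket of the trichotomy, and in each such case the paper has already exhibited a concrete coherence-level: \Cref{prop:bigvee dep(x y) k-coherent} and \Cref{thm:dep(x y) lor dep(x z) 4-coherent} for the unary sub-cases, and \Cref{lem:coherence-4} (together with \Cref{cor:simulate higher arity}) for the higher-arity sub-cases, as explicitly noted in the remark immediately preceding the corollary.

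The potentially subtle step is verifying that the ``$\FO$'' bucket of the classification and the class of coherent $\varphi$ coincide exactly, rather than merely being related by an inclusion. The paragraph right before the corollary already singles out this point for the higher-arity side: every case of \Cref{thm:higher-classification} placed in $\FO$ was proved so by reducing $\varphi$ to a unary $4$-coherent formula via \Cref{cor:simulate higher arity}, hence $\varphi$ itself is coherent. For the unary side, the summary table \Cref{tab:current results} matches each $\FO$-row with a coherence-level. Combining these two facts closes the contrapositive and yields the biconditional. This step is not a real technical obstacle, but it is the one place where one must be careful to argue that the classification-derived ``not in $\FO$'' conclusion is tight enough to yield incoherence, rather than merely absence of a uniform small-model bound.
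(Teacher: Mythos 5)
Your proposal is correct and follows essentially the same route as the paper: the forward direction via \Cref{coherent-FO-reduction}, and the converse by observing that every case of the complexity classification (\Cref{thm:higher-classification} together with the unary results) placed in $\FO$ was in fact established by exhibiting a concrete coherence level, while the remaining cases are $\LOGSPACE$- or $\NL$-complete and hence not $\FO$-definable since $\FO\subsetneq\LOGSPACE$. The only cosmetic slip is attributing the separation $\FO\subsetneq\LOGSPACE$ to Reingold's theorem rather than to the $\ACzero$ lower bound for parity, but your primary citation of that unconditional separation is the right one and matches what the paper relies on.
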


\section{Outlook}
In this paper, we carried out a systematic investigation of the model-checking problem for disjunctions of two dependence atoms. The work reported here suggests several different directions for future research, including the following:
\begin{enumerate}
\item Is it possible to classify the computational complexity of the model-checking problem for all quantifier-free formulas of dependence logic $\DL$?
In particular, is there a dichotomy theorem between $\NP$-completeness and polynomial-time solvability for this problem?  Note that it is conceivable that every problem in $\NP$ is polynomial-time equivalent to the model-checking problem of a quantifier-free $\DL$-formula. In that case, such a dichotomy theorem would be impossible (unless $\Ptime = \NP$), since by Ladner's theorem \cite{Ladner75}, if $\Ptime \not = \NP$, then there are problems in $\NP$ that are neither $\NP$-complete nor are in $\Ptime$.

\item By Proposition~\ref{coherent-FO-reduction}, if $\varphi$ is a coherent, quantifier-free $\DL$-formula, then $\varphi$ is equivalent to an $\FO$-formula $\varphi^*(T)$ that includes a relation symbol $T$ for the team. We conjecture that the converse is true, which would imply that coherence  coincides with  first-order definability for quantifier-free $\DL$-formulas. Our results confirm this conjecture for the case in which $\varphi$ is a disjunction of two dependence atoms.

\item Study the \emph{implication problem} for disjunctions of two dependence atoms, i.e., given a finite set $\Sigma$ of disjunctions of two dependence atoms and a disjunction $\psi$ of two dependence atoms, does $\Sigma$ logically imply $\psi$?  
For functional dependencies (i.e., single dependence atoms), the implication problem is  solvable in polynomial time; moreover, there is a set of simple axioms, known as Armstrong's axioms~\cite[p.~186]{DBLP:books/aw/AbiteboulHV95}, for reasoning about functional dependencies. 
As argued in the introduction, disjunctions of two dependence atoms form a natural class of database dependencies that have not been studied in their own right thus far. 
Investigating the implication problem for disjunctions of two dependence atoms will enhance the interaction between dependence logic and database theory.

\end{enumerate}

\bibliography{bib}
\end{document}